\newcommand{\Sel}{\mathrm{Sel}}
\newcommand{\gen}{\mathrm{gen}}
\newcommand*\ub[1]{\overline{#1}}
\newcommand*\lb[1]{\underline{#1}}
\begin{document}

\title{On convexity and solution concepts in cooperative interval games}

\author{Jan Bok}

\institute{Computer Science Institute, Charles University, Malostransk\'{e} n\'{a}m\v{e}st\'{i} 25, 11800, Prague, Czech Republic, \texttt{bok@iuuk.mff.cuni.cz}}

\maketitle

\begin{abstract}
Cooperative interval game is a cooperative game in which every coalition
gets assigned some closed real interval. This models uncertainty about
how much the members of a coalition get for cooperating together.

In this paper we study convexity, core and the Shapley value of games
with interval uncertainty.

Our motivation to do so is twofold.
First, we want to capture which properties are preserved when we generalize
concepts from classical cooperative game theory to interval games. Second,
since these generalizations can be done in different ways, mainly
with regard to the resulting level of uncertainty, we try to compare
them and show their relation to each other.
\\
\\
\textbf{2010 Mathematics Subject Classification}: 91A12, 65G30\\
\textbf{JEL Classification}: C71, D81
\end{abstract}

\section{Introduction}

Uncertainty and inaccurate data are an everyday issue in real-world
situations. Therefore it is important to be able to make decisions even
when the exact data are not available and only bounds on them are known.

In classical cooperative game theory, every group of players
(\emph{coalition}) knows the precise reward for their cooperation; in cooperative
interval games, only the worst and the best possible outcome is known. Such
situations can be naturally modeled with intervals encapsulating these
outcomes.

Cooperative games under interval uncertainty were first considered by Branzei,
Dimitrov and Tijs in 2003 to study bankruptcy situations \cite{bank} and later
further extensively studied by Alparslan G\"{o}k in her PhD thesis
\cite{gokphd} and in follow-up papers (see the
references section of \cite{coop1} ).

We note that there are several other models incorporating
a different level of uncertainty, namely fuzzy cooperative games,
multichoice games, crisp games (see \cite{branzei2008models} for more), or games
under bubbly uncertainty \cite{palanci2014cooperative}.

There are several reasons why it is interesting to study cooperative interval game.
From the aforementioned models of cooperative games, it is a quite simple model but it
is easier to analyze and it is suitable for situations where we do not have any other
assumptions on data we get. There are already a few applications of this model, namely
on forest situations \cite{forest}, airport problems \cite{gok2009airport},
bankruptcy \cite{bank} or network design \cite{d2016network}.

We continue in the line of research started in \cite{BH15} (there is also an updated
version on arXiv \cite{1410.3877}). We focus on selections, that is on possible outcomes
of interval games.

\paragraph{Our results.}
Here is a summary of our results and also how our paper is organized.
\begin{itemize}[topsep=2pt]
	\item Section \ref{sec:con} is about convexity in interval games. We
	characterize selection convex interval games in a style of Shapley \cite{shapley1971cores}.
	\item Section \ref{sec:coincidence} investigates a problem of core coincidence, i.e.\ when 
	the two different versions of generalized core for interval games coincide.
	We partially solve this problem.
	\item Section \ref{sec:shapley} is about the Shapley value for interval games.
	We present a different axiomatization of the interval Shapley extension
	than the one by \cite{cinani} and also show some important properties of this function.
\end{itemize}

\section{Preliminaries}
\label{sec:prelim}

\subsection{Classical cooperative games}

Comprehensive sources on classical cooperative game theory are for example
\cite{branzei2008models,driessen1988cooperative,gilles2010cooperative,peleg2007introduction}.
For more on applications, see e.g. \cite{bilbao2000cooperative,combopt,insurance}.
Here we present only the necessary background
theory for studying interval games. We examine only the games with transferable utility
(TU) and therefore by a cooperative game or a game we mean a cooperative TU game.

\begin{definition}
  \emph{(Cooperative game)} 
  A cooperative game is an ordered pair $(N, v)$, where $N = \{1,2,\ldots ,n\}$
  is a set of players and $v: 2^N \to \mathbb{R}$ is a characteristic function
  of the cooperative game. We further assume that $v(\emptyset) = 0$.
\end{definition}

The set of all cooperative games with a player set $N$ is denoted by $G^N$.
Subsets of $N$ are called \emph{coalitions} and $N$ itself is called the
\emph{grand coalition}. We often write $v$ instead of $(N,v)$ because we can
identify a game with its characteristic function.

\paragraph{Solution concepts.} To further analyze players' gains, we need a \emph{payoff vector} which
can be interpreted as a proposed distribution of rewards between players.

\begin{definition}
  \emph{(Payoff vector)} A payoff vector for a cooperative game $(N, v)$ is a vector
  $x \in \mathbb{R}^N$ with $x_i$ being a reward given to the $i$th player.
\end{definition}

\begin{definition}
  \emph{(Imputation)} An imputation of $(N,v) \in G^N$ is a vector $x \in \mathbb{R}^N$
  such that $\sum_{i \in N} x_i = v(N)$ and $x_i \ge v(\{i\})$ for every $i \in N$.

  The set of all imputations of a given game $(N,v)$ is denoted by
  $I(v)$.
\end{definition}

\begin{definition}
  \emph{(Core)} The core of $(N,v) \in G^N$ is the set
  $$C( v ) = \Bigl\{ x \in I(v); \; \sum_{ i \in S } x_i \geq v(S), \forall S \subseteq N \Bigr\}.$$
\end{definition}

The last solution concept we will write about is the Shapley value. It was introduced by Lloyd Shapely in 1952 \cite{shapley}.
It has many interesting properties; namely, it is a one-point solution concept, it always
exists and it can be axiomatized by very natural axioms.
We refer to \cite{peleg2007introduction} for a survey of results on the Shapley value.

\begin{theorem}\emph{(Shapley, 1952, \cite{shapley1971cores})} There exists a unique
function $f: G^N \to \mathbb{R}^N$, satisfying the following properties for every
$(N,v) \in G^N$.
\begin{itemize}
	\item (Efficiency) It holds that $\sum_{i \in N}f_i(v) = v(N)$.
	\item (Dummy player) It holds $f_i(v) = 0$ for every $i \in N$, such that for every
	$S\setminus \{i\} \subseteq N$, equality $v(S\cup\{i\}) = v(S)$ holds.
	\item (Symmetry) If for every $S \subseteq N \setminus \{ i,j\}$,
	$$v(S \cup i) - v(S) = v(S \cup j) - v(S)$$
	holds, then $f_i(v) = f_j(v)$.
	\item (Additivity) For every two games $u,v \in G^N$, $f_i(u+v) = f_i(u) + f_i(v)$ holds.
\end{itemize}
\end{theorem}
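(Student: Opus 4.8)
The plan is to prove existence and uniqueness separately, exploiting the linear structure of $G^N$. Observe that $G^N$ is a real vector space of dimension $2^n - 1$ (a game is determined by its values on the nonempty coalitions, since $v(\emptyset) = 0$). I would first establish that the \emph{unanimity games} $u_T$, defined for each nonempty $T \subseteq N$ by $u_T(S) = 1$ if $T \subseteq S$ and $u_T(S) = 0$ otherwise, form a basis. Concretely, every $v \in G^N$ has a unique expansion $v = \sum_{\emptyset \neq T \subseteq N} c_T u_T$, where the coefficients (the Harsanyi dividends) are recovered by Möbius inversion, $c_T = \sum_{S \subseteq T} (-1)^{|T| - |S|} v(S)$.

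For \textbf{uniqueness}, I would first pin down $f$ on every scalar multiple $c \cdot u_T$ using the three axioms other than Additivity. Any player $i \notin T$ is a dummy in $c \cdot u_T$ (adjoining $i$ never changes whether $T$ is contained in a coalition), so the Dummy axiom forces $f_i(c \cdot u_T) = 0$. Any two players $i, j \in T$ satisfy the hypothesis of the Symmetry axiom in $c \cdot u_T$, so they receive equal payoffs; since by Efficiency these payoffs sum to $(c \cdot u_T)(N) = c$, each player of $T$ receives $c / |T|$. Writing an arbitrary game as $v = \sum_T c_T u_T$ and applying Additivity term by term then yields the forced formula
$$f_i(v) = \sum_{T \ni i} \frac{c_T}{|T|}.$$
Thus at most one function can satisfy the axioms. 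The one subtlety is that Additivity speaks only of sums of games, not of scalar multiples; but this causes no trouble, because the values $f_i(c_T u_T)$ are obtained directly from the other three axioms rather than by scaling.

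For \textbf{existence}, I would exhibit a concrete function and verify the axioms. The most convenient vehicle is the marginal-contribution formula
$$f_i(v) = \sum_{S \subseteq N \setminus \{i\}} \frac{|S|!\,(n - |S| - 1)!}{n!}\bigl(v(S \cup \{i\}) - v(S)\bigr),$$
which one reads as averaging each player's marginal contribution over all $n!$ orderings of $N$. Here Additivity is clear from linearity, the Dummy axiom is clear because every marginal $v(S \cup \{i\}) - v(S)$ vanishes for a null player, and Symmetry follows from pairing the summand for $S$ in $f_i$ with the one for $(S \setminus \{j\}) \cup \{i\}$ in $f_j$. Efficiency is the one genuine computation: summing over orderings, each $v(S)$ with $S \neq N$ appears with coefficients $+1$ and $-1$ equally often, leaving $\sum_{i \in N} f_i(v) = v(N)$. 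Alternatively, defining $f$ directly by the dividend formula $f_i(v) = \sum_{T \ni i} c_T / |T|$ makes Additivity and Efficiency immediate, at the cost of a short inclusion--exclusion argument for the Dummy and Symmetry axioms.

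I expect the main obstacle to be bookkeeping rather than conceptual difficulty: establishing that the unanimity games really are a basis (the Möbius-inversion identity) and carrying out the efficiency verification for the marginal-contribution formula are the two places where the combinatorics must be done carefully. Everything else reduces to a direct check of the axioms on the basis.
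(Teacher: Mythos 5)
Your proposal is correct, and it is essentially the classical argument from the source the paper cites: the paper itself gives no proof of this theorem (it is stated as a cited result of Shapley), and your route --- uniqueness via the unanimity-game basis $\{u_T\}$ with dummy/symmetry/efficiency pinning down $f$ on each $c\,u_T$, plus existence via the marginal-contribution formula --- is precisely the standard proof. You also correctly handle the one real pitfall, namely that Additivity alone does not give homogeneity, by determining $f_i(c\,u_T)$ directly from the other axioms (including for negative coefficients $c_T$) rather than by scaling.
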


This unique function is called the \emph{Shapley value} ($\phi$) and it is defined as
$$\phi_i(v) \coloneqq \sum_{S \subseteq N\setminus \{i\}} \frac{|S|!(n-|S|-1)!}{n!} (v(S \cup i) - v(S)).$$

\paragraph{Classes.} There are many important classes of cooperative games. Here we show the most
important ones.

\begin{definition} \emph{(Monotonic game)} A game $(N,v)$ is monotonic if for
every $T \subseteq S \subseteq N$ we have
$$v(T) \le v(S)\textrm{.}$$
\end{definition}

Informally, in monotonic games, bigger coalitions are stronger.

Another important type of game is a \emph{convex game}.

\begin{definition} \emph{(Convex game)} A game $(N,v)$ is convex if its
characteristic function is supermodular. The characteristic function
is supermodular if for every $S \subseteq T \subseteq N$,
$$v(T) + v(S) \le v(S \cup T) + v(S \cap T).$$
\end{definition}

Clearly, supermodularity implies superadditivity. The class of convex
games is maybe the most prominent class, it has many applications
and theoretical properties. We present the most important one for this
paper.

\begin{theorem} (Shapley, 1971, \cite{shapley1971cores}) \label{thm:convex}
  Every convex game has a nonempty core.
\end{theorem}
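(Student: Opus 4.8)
The plan is to prove that every convex game has a nonempty core by exhibiting an explicit point in the core, namely a marginal vector arising from a permutation of the players. The key insight is that for convex games, the marginal contribution vectors (one for each ordering of players) all lie in the core, so it suffices to show that one such vector satisfies the core constraints.

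\medskip

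First I would fix an arbitrary ordering of the players, say the identity ordering $1, 2, \ldots, n$, and define the associated marginal vector $x$ by setting
\[
x_i \coloneqq v(\{1, 2, \ldots, i\}) - v(\{1, 2, \ldots, i-1\})
\]
for each $i \in N$. This vector records how much each player adds when joining the coalition of all lower-indexed players. The efficiency condition $\sum_{i \in N} x_i = v(N)$ then follows immediately as a telescoping sum, and since supermodularity implies $x_i \ge v(\{i\}) - v(\emptyset) = v(\{i\})$, the vector $x$ is an imputation, so $x \in I(v)$.

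\medskip

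The main work is to verify the core inequality $\sum_{i \in S} x_i \ge v(S)$ for every coalition $S \subseteq N$. I would proceed by induction on the size of $S$, peeling off the highest-indexed element. Writing $P_i \coloneqq \{1, \ldots, i\}$ for the initial segments, the essential step is a comparison of marginal contributions: supermodularity of $v$ guarantees that for nested sets $A \subseteq B$, adding the same element yields a larger increment to the larger set. Concretely, if $j$ is the largest index in $S$ and $S' = S \setminus \{j\}$, then applying supermodularity to the pair $S' \subseteq P_{j-1}$ with the element $j$ shows that $x_j = v(P_j) - v(P_{j-1}) \ge v(S' \cup \{j\}) - v(S') = v(S) - v(S')$. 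Combining this with the inductive hypothesis $\sum_{i \in S'} x_i \ge v(S')$ gives $\sum_{i \in S} x_i \ge v(S)$, completing the induction.

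\medskip

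The hard part, and the crux of the whole argument, is the marginal-monotonicity inequality $v(P_j) - v(P_{j-1}) \ge v(S' \cup \{j\}) - v(S')$, which is exactly where convexity (supermodularity) is used and without which the claim fails. I would derive it by a direct application of the supermodular inequality $v(T) + v(S) \le v(S \cup T) + v(S \cap T)$ to the carefully chosen sets $T = P_{j-1}$ and $S = S' \cup \{j\}$, noting that $S' \subseteq P_{j-1}$ forces $S \cap T = S'$ and $S \cup T = P_j$ when $S' \subseteq P_{j-1}$. Once this inequality is secured the rest is bookkeeping, so I expect the entire difficulty to be concentrated in setting up the right nested sets to feed into supermodularity.
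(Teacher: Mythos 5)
Your proof is correct: the marginal vector $x_i = v(\{1,\ldots,i\}) - v(\{1,\ldots,i-1\})$ is efficient by telescoping, and your induction peeling off the largest index $j$ of a coalition $S$, using supermodularity on the pair $T = P_{j-1}$, $S' \cup \{j\}$ (whose union is $P_j$ and intersection is $S'$), gives exactly the needed inequality $x_j \ge v(S) - v(S')$. Note that the paper does not prove this statement at all --- it is quoted as a known result with a citation to Shapley (1971) --- and your argument is precisely the classical one from that source, so there is nothing in the paper to diverge from. One caveat worth flagging: the paper's own definition of supermodularity is stated only for nested coalitions $S \subseteq T$, for which the inequality is trivially an equality; your proof (correctly) uses the standard unrestricted version $v(S) + v(T) \le v(S \cup T) + v(S \cap T)$ for arbitrary $S, T$, which is clearly what the paper intends.
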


\subsection{Interval analysis}

\begin{definition}
  \emph{(Interval)} An interval $X$ is a set
  $$X \coloneqq [\underline{X},\overline{X}] =\{x \in \mathbb{R}: \underline{X} \le x \le \overline{X}\}\textrm{.}$$
  with $\underline{X}$ being the lower bound and $\overline{X}$ being the upper bound of the interval. The length of an interval $X$ is defined as $|X| \coloneqq | \ub{X} - \lb{X}|$.
\end{definition}

From now on, by an interval we mean a closed interval. The set of
all real closed intervals is denoted by $\mathbb{IR}$.

The following definition (from \cite{moore2009introduction}) shows how to do basic arithmetics with intervals.

\begin{definition} \label{def:ari} \emph{(Interval arithmetics)}
  For every $X, Y, Z \in \mathbb{IR}$, and $0 \notin Z$, define
\begin{align*}
  X + Y &\coloneqq [\underline{X} + \underline{Y}, \overline{X} + \overline{Y}]\textrm{,}\\
  X \ominus Y &\coloneqq [\underline{X} - \overline{Y}, \overline{X} - \underline{Y}]\textrm{,}\\
  X \cdot Y &\coloneqq [\min S , \max S],\ S = \{\underline{X}\overline{Y}, \overline{X}\underline{Y}, \underline{X}\underline{Y}, \overline{X}\overline{Y}\}\textrm{, and}\\
  X\,/\,Z &\coloneqq [\min S , \max S],\ S = \{\underline{X}/\overline{Z}, \overline{X}/\underline{Z}, \underline{X}/\underline{Z}, \overline{X}/\overline{Z}\}\textrm{.}
\end{align*}
\end{definition}

For our purpose, we need to have a slightly different definition of subtraction.
The aforementioned subtraction operator is known as \emph{Moore's subtraction operator}.

\begin{definition}\emph{(Partial subtraction operator)}
For every $I, J \in \mathbb{IR}$, such that $\lb{I} - \lb{J} \le \ub{I}-\ub{J}$, define
$$I - J \coloneqq [\lb{I} - \lb{J},\, \ub{I}-\ub{J}].$$
\end{definition}

In other words, the length of the subtracted interval has to be
lesser or equal to the length of the interval we subtract from.

\begin{example}
	Take two intervals $[1,4]$ and $[3, 5]$. Then  $[1,4] - [3, 5] = [-2,-1]$.
	Notice, however, that $[3,5] - [1, 4]$ is undefined.
\end{example}

We note that this notation is not common in interval analysis. The minus sign is
used for Moore's subtraction operator there. Also, in our previous paper
\cite{BH15,1410.3877} we used minus sign for Moore's subtraction.

Alparslan G\"{o}k \cite{gokphd} choose to compare intervals in the following
way, using the weakly better
operator. That was inspired by Hinojosa et al. \cite{partially}.

\begin{definition} \emph{(Weakly better operator $\succeq$)}
An interval $I$ is weakly better than interval $J$ ($I \succeq J$) if $\underline{I} \ge
\underline{J}$ and $\overline{I} \ge \overline{J}$. Interval $I$ is better than $J$ ($I \succ J$) if and only if $I
\succeq J$ and $I \not= J$.
\end{definition}

Naturally, we also use $A \prec B$ and $C \preceq D$ for $B \succ A$
and $D \succeq C$, respectively.

\subsection{Cooperative interval games}
\label{coopgamesintro}

Now we review basics of cooperative games with interval uncertainty. The following is
the main definition of this paper.

\begin{definition}
  \emph{(Cooperative interval game)} 
  A cooperative interval game is an ordered pair $(N, w)$, where $N = \{1,2,\ldots ,n\}$
  is a set of players and $w: 2^N \to \mathbb{IR}$ is the characteristic function
  of the cooperative game. We further assume that $w(\emptyset) = [0,0]$.
  The set of all interval cooperative games on a player set $N$ is denoted by $IG^N$.
\end{definition}

\begin{note}
	We often write $w(i)$ instead of $w(\{i\})$ and $w(i,j)$ instead of $w(\{i,j\})$.
\end{note}

\begin{note}
Every cooperative interval game in which its characteristic function maps to degenerate
intervals only can be associated with a classical cooperative game. The converse holds as well.
\end{note}

\begin{definition} \emph{(Border games)}
  For every $(N,w) \in IG^N$, border games $(N,\underline{w}) \in G^N$ (lower border game) and $(N,\overline{w}) \in G^N$ (upper border game) are given by
  $\underline{w}(S) \coloneqq \underline{w(S)}$ and $\overline{w}(S)\coloneqq \overline{w(S)}$ for every $S \in 2^N$.
\end{definition}

\begin{definition} \emph{(Length game)} The length game of $(N,w) \in IG^N$ is the game $(N,|w|) \in G^N$ with
  $$|w|(S)\coloneqq \overline{w}(S) - \underline{w}(S),\ \forall S \in 2^N.$$
\end{definition}

\begin{definition} \emph{(Degenerated game)}
	We call a game $(N,w) \in IG^N$ \emph{degenerated} if its length game is everywhere zero,
	that is, $|w|(S) = 0$ for every $S \in 2^N$. A \emph{non-degenerated game} is a game which
	is not degenerated.
\end{definition}

The basic notion of our approach will be a selection and consequently a selection imputation and a selection core.

\begin{definition} \emph{(Selection)} A game $(N,v) \in G^N$ is a selection of $(N,w)
\in IG^N$ if for every $S \in 2^N$ we have $v(S) \in w(S)$. The set of all selections
of $(N,w)$ is denoted by $\Sel(w)$.
\end{definition}

Note that border games are examples of selections and also of degenerated games.

\paragraph{Solution concepts.} There are many possibilities
how to define imputations and core for interval games. We present the following
two. The first one is based on
selections, the second one on the weakly better operator.

\begin{definition} The set of \emph{interval selection
imputations} (or just selection imputations) of $(N,w) \in IG^N$ is defined as
$$\mathcal{SI}(w) = \bigcup \big\{I(v)\ |\ v \in \Sel(w) \big\}.$$
\end{definition}

\begin{definition} \label{def:sc} The \emph{interval selection core} (or just
selection core) of $(N,w) \in IG^N$ is defined as $$\mathcal{SC}(w) = \bigcup
\big\{C(v)\ |\ v \in \Sel(w) \big\} \textrm{.}$$ \end{definition}

In an analogous way as in classical games, we have a term for games with nonempty
selection core for all selections.

\begin{definition} \cite{alparslan2008cores}
	An interval game is called \emph{strongly balanced} if every
	selection of this game has a nonempty core. The set of all strongly balanced
	games on a player set $N$ is denoted by BIG$^N$.
\end{definition}

\begin{definition} The set of \emph{interval imputations} of
$(N,w) \in IG^N$ is defined as $$\mathcal{I}(w) \coloneqq \Big\{ (I_1,I_2,\ldots,I_N)
\in \mathbb{IR}^N\ |\ \sum_{i\in N} I_i = w(N),\ I_i \succeq w(i),\ \forall i
\in N \Big\} \textrm{.}$$ \end{definition}

\begin{definition} \label{def:c} The \emph{interval core} of $(N,w) \in IG^N$ is
defined as $$\mathcal{C}(w) \coloneqq \Big\{ (I_1,I_2,\ldots,I_N) \in \mathcal{I}(w)\
|\ \sum_{i \in S}I_i \succeq w(S),\ \forall S \in 2^N \setminus \{\emptyset\}
\Big\} \textrm{.}$$ \end{definition}

An important difference between the definitions of interval and selection core and imputation is that selection concepts
yield payoff vectors from $\mathbb{R}^N$, while $\mathcal{I}$ and
$\mathcal{C}$ yield vectors from $\mathbb{IR}^N$. Thus they both possess a different
degree of uncertainty.

\paragraph{Classes of interval games.}

\begin{definition} \emph{(Size monotonic interval game)} A game $(N,w) \in IG^N$ is size monotonic if for
every $T \subseteq S \subseteq N$ we have
$$|w|(T) \le |w|(S)\textrm{.}$$
That is, its length game is monotonic.
The class of size monotonic games on a player set $N$ is denoted by $\mathrm{SMIG}^N$.
\end{definition}

As we can see, size monotonic games capture situations in which an interval
uncertainty grows with the size of a coalition.

We should be careful with the following analogy of a convex game since unlike for classical games, supermodularity
is not the same as convexity.

\begin{definition} \emph{(Supermodular interval game)} An interval game $(N,w)$ is supermodular interval
if for every $S \subseteq T \subseteq N$ holds
$$w(T) + v(S) \preceq w(S \cup T) + w(S \cap T)\textrm{.}$$
\end{definition}

We get immediately that an interval game is supermodular interval if and only if its
border games are convex.

\begin{definition} \emph{(Convex interval game)} An interval game $(N,w)$ is convex interval if its
border games and length game are convex.
We write $\mathrm{CIG}^N$ for a set of convex interval games on a player set $N$.
\end{definition}

A convex interval game is supermodular as well but the converse does not
hold in general.  See \cite{alparslan2009convex} for characterizations of convex interval games
and discussion of their properties.

Finally, we define selection based classes of interval games. The paper \cite{BH15} discusses their properties and relations with the previous classes.

\begin{definition} \emph{(Selection monotonic interval game)} An interval game
$(N,v)$ is selection monotonic if all its selections are monotonic games.
The class of such games on a player set $N$ is denoted by $\mathrm{SeMIG}^N$.
\end{definition}

\begin{definition} \emph{(Selection convex interval game)} An interval game $(N,v)$
is selection convex if all its selections are convex games. The class of such games on a player set $N$ is denoted
by $\mathrm{SeCIG}^N$.
\end{definition}

\subsection{Notation}
We will use $\le$ relation on real vectors. For every $x,y \in \mathbb{R}^N$
we write $x \le y$ if $x_i \le y_i$ holds for every $1 \le i \le N$.

We do not use symbol $\subset$ in this paper. Instead, $\subseteq$ and
$\subsetneq$ are used for the subset and the proper subset relation, respectively, to avoid
ambiguity.

We also use $x(S)$ instead of $\sum_{i \in S} x_i$ occasionally.

Throughout the papers on cooperative interval games, notation, especially of
core and imputations, is not unified. It is, therefore, possible to encounter
different notation from ours. Also, in some papers the selection core is called the core of interval game.
We consider that confusing and that is why we use the term selection core
instead. The term selection imputation is then used because of its connection with
the selection core.

\section{Convexity}
\label{sec:con}

We present a characterization of the interval games in the class $\mathrm{SeCIG}$,
analogous to a classical result of Shapley on convex games \cite{shapley1971cores} and to Theorem 3.1 on convex interval games in \cite{alparslan2009convex}.

\begin{theorem} \label{thm:shlike}
For every interval game $(N,w)$, the following assertions are equivalent.
	\begin{enumerate}
		\item The game $(N,w)$ is a selection convex interval game.
		\item For every nonempty $S,T \in 2^N$, such that $S \cap T \neq T$, and $S \cap T \neq S$,
		$$\ub{w}(S) + \ub{w}(T) \leq \lb{w}(S \cup T) + \lb{w}(S \cap T).$$
		\item For every coalition $U_1, U_2, U \in 2^N$, such that $U_1 \subsetneq U_2 \subseteq N \setminus U$, and $U$ is nonempty,
		$$\ub{w}(U_1 \cup U) - \lb{w}(U_1) \leq \lb{w}(U_2 \cup U) - \ub{w}(U_2).$$
		\item For every coalition $T_1, T_2 \in 2^N$, and for every $i \in N$, 
		such that $T_1 \subsetneq T_2 \subseteq N \setminus \{ i \}$,
		$$\ub{w}(T_1 \cup \{ i\}) - \lb{w}(T_1) \leq 
		\lb{w}(T_2 \cup \{ i\}) - \ub{w}(T_2).$$
	\end{enumerate}
\end{theorem}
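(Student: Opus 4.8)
The plan is to prove the cycle of implications $(1)\Rightarrow(2)\Rightarrow(3)\Rightarrow(4)\Rightarrow(1)$, which is the most economical way to establish the fourfold equivalence. Let me think about each arrow.

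For $(1)\Rightarrow(2)$: selection convexity means every selection $v\in\Sel(w)$ is supermodular, i.e. $v(S)+v(T)\le v(S\cup T)+v(S\cap T)$. The condition $S\cap T\ne S,T$ means neither $S\subseteq T$ nor $T\subseteq S$ (the supermodularity inequality is trivially an equality when they're nested). Given non-nested $S,T$, the four coalitions $S, T, S\cup T, S\cap T$ are distinct. The idea: build a particular selection that makes $v(S)=\ub w(S)$, $v(T)=\ub w(T)$, $v(S\cup T)=\lb w(S\cup T)$, $v(S\cap T)=\lb w(S\cap T)$, and apply supermodularity to get (2). I need to check this assignment extends to a consistent selection — on these four coalitions the values are independent and each lies in its interval, and on all other coalitions I can pick any value in $w(\cdot)$, so such a selection exists.

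For $(4)\Rightarrow(1)$: this is likely the main obstacle, because I must reconstruct global supermodularity of an arbitrary selection from the "marginal" singleton-increment condition (4). The classical Shapley result says a game is convex iff marginal contributions of a fixed player are monotone in the coalition, i.e. $v(T_1\cup\{i\})-v(T_1)\le v(T_2\cup\{i\})-v(T_2)$ for $T_1\subseteq T_2\subseteq N\setminus\{i\}$. So for a fixed selection $v$ I want to derive exactly this classical inequality from (4). The key estimate is $v(T_1\cup\{i\})-v(T_1)\le \ub w(T_1\cup\{i\})-\lb w(T_1)$ and $\lb w(T_2\cup\{i\})-\ub w(T_2)\le v(T_2\cup\{i\})-v(T_2)$, because $v$ takes values inside the intervals; chaining these with (4) gives the Shapley marginal condition for $v$, hence $v$ is convex. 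The strict inclusion $T_1\subsetneq T_2$ in (4) versus $T_1\subseteq T_2$ in Shapley's criterion needs a remark: the nested case $T_1=T_2$ is vacuous for the marginal inequality, so requiring strictness loses nothing.

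For the middle implications $(2)\Rightarrow(3)\Rightarrow(4)$: these are combinatorial reindexings. For $(3)\Rightarrow(4)$, specialize $U=\{i\}$ and rename $U_1=T_1$, $U_2=T_2$ — this is immediate. For $(2)\Rightarrow(3)$, I set $S=U_1\cup U$ and $T=U_2$; then $S\cup T=U_2\cup U$ and $S\cap T=U_1$ (using $U_1\subsetneq U_2\subseteq N\setminus U$, so $U\cap U_2=\emptyset$ and $(U_1\cup U)\cap U_2=U_1$), and the non-nestedness $S\cap T\ne S,T$ holds precisely because $U$ is nonempty and $U_1\subsetneq U_2$. Substituting into (2) yields $\ub w(U_1\cup U)+\ub w(U_2)\le \lb w(U_2\cup U)+\lb w(U_1)$, which rearranges to (3). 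I expect the only delicate point in these steps is verifying the set-theoretic identities for the chosen $S,T$ and confirming the non-nested hypotheses translate correctly, so that each inequality is an admissible instance of the preceding condition rather than a trivial nested case.
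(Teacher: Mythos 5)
Your proof is correct, but it takes a genuinely different route from the paper's. You close a single directed cycle $(1)\Rightarrow(2)\Rightarrow(3)\Rightarrow(4)\Rightarrow(1)$, whereas the paper proves $(1)\leftrightarrow(2)$ by citing Theorem~2 of \cite{1410.3877}, and then proves $(2)\leftrightarrow(3)\leftrightarrow(4)$ by explicit two-way implications: $(2)\rightarrow(3)$ and $(3)\rightarrow(2)$ use the same substitutions you use ($S=U_1\cup U$, $T=U_2$, respectively $U=S\setminus T$, $U_1=S\cap T$, $U_2=T$), and $(4)\rightarrow(3)$ is done by a minimal-cardinality induction on $|U|$, which the paper only sketches. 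Your two genuinely different ingredients are: (i) an actual proof of $(1)\Rightarrow(2)$ via the extremal selection taking $\ub{w}$ on $S,T$ and $\lb{w}$ on $S\cup T$, $S\cap T$ --- valid because non-nestedness makes these four coalitions pairwise distinct, so the assignment is consistent and extends to a selection --- where the paper merely refers to earlier work; and (ii) the closing step $(4)\Rightarrow(1)$, which sandwiches the marginal contributions of an arbitrary selection $v$ between the interval bounds,
$$v(T_1\cup\{i\})-v(T_1)\le\ub{w}(T_1\cup\{i\})-\lb{w}(T_1)\le\lb{w}(T_2\cup\{i\})-\ub{w}(T_2)\le v(T_2\cup\{i\})-v(T_2),$$
and then invokes Shapley's classical characterization of convexity by monotone marginal contributions (with the correct remark that the case $T_1=T_2$ is vacuous). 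What your route buys is economy and self-containment: it bypasses both the external reference for $(1)\leftrightarrow(2)$ and the induction behind $(4)\rightarrow(3)$. What it costs is reliance on the classical marginal-contribution criterion for supermodularity as an outside lemma; that is standard and entirely in the spirit of the theorem, but it should be stated or cited explicitly. The paper's approach, by contrast, keeps all conversions among $(2)$--$(4)$ at the level of elementary interval inequalities in both directions, at the price of deferring the selection-theoretic content to its predecessor.
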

\begin{proof}\ \vspace{0.4mm}

$(1) \leftrightarrow (2):$ This proof is very similar to the proof
of Theorem 2 in \cite{1410.3877}.

$(2) \rightarrow (3):$ Suppose for a contradiction that there exist $U_1, U_2, U \in 2^N$, $U$ nonempty, such
that $U_1 \subsetneq U_2 \subseteq N \setminus U$, and
$$\ub{w}(U_1 \cup U) - \lb{w}(U_1) > \lb{w}(U_2 \cup U) - \ub{w}(U_2).$$

Define $S \coloneqq U_1 \cup U$, and $T \coloneqq U_2$. Both $S$ and $T$ are nonempty
sets and they are incomparable. Furthermore:
\begin{align*}
\ub{w}(U_1 \cup U) - \lb w(U_1) &> \lb w (U_2 \cup U) - \ub w (U_2), \\
\ub{w}(S) - \lb w(U_1) &> \lb w (T \cup U) - \ub w (T), \\
\ub{w}(S) + \ub w(T) &> \lb w (T \cup U) + \lb w (U_1), \\
 \ub{w}(S) + \ub w(T)&> \lb w (S \cup T) + \lb w (S \cap T).
\end{align*}
And we obtained a contradiction.

$(3) \rightarrow (4):$ Straightforward; take $U_1 \coloneqq T_1,
U_2 \coloneqq T_2, U \coloneqq \{ i \}$.

$(4) \rightarrow (3):$ Suppose that (4) holds and (3) does not. 
Take $U$ that violates (3) of minimal cardinality. If $|U| = 1$, we get
a contradiction. If $|U| > 1$, we can construct $U'$, with $|U'| = |U| - 1$,
such that it violates (3) as well. This contradicts the minimality of $U$.

$(3) \rightarrow (2):$
For a contradiction, take $S$ and $T$ which violate (2).
Define $U \coloneqq S \setminus T$; this must be nonempty since S and T are nonempty and incomparable.
Define $U_1 \coloneqq S \cap T$ and $U_2 \coloneqq T$.
As for the conditions on $U_1$ and $U_2$, we see that
$U_1 \subsetneq U_2$, since $U$ is nonempty and $(S \cap T) \subseteq T$. Now:
\begin{align*}
\ub{w}(S \cup U) - \ub w(T) &> \lb w (S \cup T) + \lb w(S \cap T) \\
\ub{w}(U_1 \cup U) + \ub w(T) &> \lb w (U_1 \cup U \cup T) + \lb w (U_1) \\
\ub{w}(U_1 \cup U) - \lb w(U_1) &> \lb w (U_2 \cup U) - \ub w (U_2) \\
\end{align*}
A contradiction.
\qed
\end{proof}

\section{Core coincidence}
\label{sec:coincidence}

In Alparslan-G\"{o}k's PhD thesis \cite{gokphd} and in paper \cite{alparslan2011set}, the following question is suggested:
\begin{quote}
\emph{``A
difficult topic might be to analyze under which conditions the set of payoff
vectors generated by the interval core of a cooperative interval game
coincides with the core of the game in terms of selections of the interval
game.''}
\end{quote}

The main thing to notice is that while the interval core gives us a set of
interval vectors, selection core gives us a set of real numbered vectors. To
be able to compare them, we need to assign to a set of interval vectors a set
of real vectors generated by these interval vectors. That is exactly what the
following function $\gen$ does.

\begin{definition} The function $\gen : 2^{\mathbb{IR}^N} \to 2^{\mathbb{R}^N}$ maps to every set of interval vectors a set of its selections. It is defined as
$$\gen(S) = \bigcup_{s \in S} \big\{(x_1,x_2,\ldots,x_n)\ |\ x_i \in s_i\big\}\textrm{.}$$
\end{definition}

The core coincidence problem can be formulated in the following way. 

\begin{problem}{(Core coincidence problem)}
What are the necessary and sufficient conditions so that an interval game satisfies
$\gen(\mathcal{C}(w)) = \mathcal{SC}(w)$?
\end{problem}

To avoid a cumbersome notation we define the following property.

\begin{definition}
	Let $(N,w)$ be a cooperative interval game. We call the game \emph{core-coincident}
	if $\gen{(\mathcal{C}(w))} = \mathcal{SC}(w)$. Also, we say that a set of interval
	games is \emph{core-coincident} if all games in this set are core-coincident.
\end{definition}

Our results in this section are an important step towards a complete classification of core-coincident
games.

\subsection{Positive results}

\begin{proposition} \label{prop:empty}
	Every cooperative interval game with empty selection core is core-coincident.
\end{proposition}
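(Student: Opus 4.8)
The key observation is that the inclusion $\gen(\mathcal{C}(w)) \subseteq \mathcal{SC}(w)$ holds for \emph{every} interval game, with no hypothesis on the selection core. Granting this, the proposition is immediate: if $\mathcal{SC}(w) = \emptyset$, then $\gen(\mathcal{C}(w))$ is contained in the empty set and hence empty as well, so $\gen(\mathcal{C}(w)) = \mathcal{SC}(w) = \emptyset$ and the game is core-coincident by definition. The plan is therefore to prove this general inclusion.

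First I would take an arbitrary $x \in \gen(\mathcal{C}(w))$. By definition of $\gen$, there is an interval core element $(I_1, \ldots, I_n) \in \mathcal{C}(w)$ with $x_i \in I_i$, i.e.\ $\underline{I_i} \le x_i \le \overline{I_i}$, for every $i \in N$. The goal is to exhibit a single selection $v \in \Sel(w)$ having $x$ in its classical core; this places $x$ in $\mathcal{SC}(w)$.

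The natural candidate is to push every proper coalition to its lower border and use efficiency on the grand coalition: define $v(\emptyset) := 0$, $v(N) := x(N)$, and $v(S) := \underline{w}(S)$ for every proper nonempty $S$. I would then verify the two requirements. That $v$ is a selection: for proper $S$ this is clear since $\underline{w}(S) \in w(S)$, and for $S = N$ it follows from $\sum_{i \in N} I_i = w(N)$, which gives $\underline{w}(N) \le \sum_{i \in N} x_i \le \overline{w}(N)$, hence $x(N) \in w(N)$. That $x \in C(v)$: efficiency is $x(N) = v(N)$ by construction, individual rationality is $x_i \ge \underline{w}(i) = v(i)$ coming from $I_i \succeq w(i)$, and the coalitional inequalities reduce to $x(S) \ge \underline{w}(S)$ for proper $S$, which I would obtain by summing $x_i \ge \underline{I_i}$ over $i \in S$ and invoking the interval core inequality $\sum_{i \in S} \underline{I_i} \ge \underline{w}(S)$ coming from $\sum_{i \in S} I_i \succeq w(S)$.

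The step requiring the most care is the simultaneous verification that the chosen $v$ is a genuine selection \emph{and} admits $x$ in its core; the tension is that efficiency forces $v(N) = x(N)$, while the core conditions want $v(S)$ as small as possible on the proper coalitions. Collapsing each proper coalition to $\underline{w}(S)$ resolves this tension, and the only nontrivial inequality left to check is $\underline{w}(S) \le x(S)$, which is exactly what the interval core membership of $(I_1, \ldots, I_n)$ supplies. Once the inclusion is established the proposition follows with no further work.
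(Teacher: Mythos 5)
Your proof is correct and takes essentially the same route as the paper: the paper's one-line proof cites Theorem 7 of \cite{BH15}, which is precisely the general inclusion $\gen(\mathcal{C}(w)) \subseteq \mathcal{SC}(w)$ that you establish, and the proposition then follows immediately from emptiness of $\mathcal{SC}(w)$. The only difference is that you prove that inclusion from scratch (via the selection $v(N) \coloneqq x(N)$, $v(S) \coloneqq \underline{w}(S)$ for proper nonempty $S$) rather than citing it, and your verification of both the selection property and the core membership is sound.
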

\begin{proof}
	This easily follows from \cite[Theorem 7]{BH15}. \qed
\end{proof}

\begin{proposition}
	Every degenerated cooperative interval game is core-coincident.
\end{proposition}
\begin{proof}
	It is easy to check that definitions of selection core (Definition \ref{def:sc})
	and interval core (Definition \ref{def:c}) coincide for degenerate games.
	\qed
\end{proof}

We present the following example, showing there exist infinitely many core-
coincident non-degenerated games with a nonempty
player set. But first, we need one more result.

\begin{theorem} \label{thm:technical} \emph{(Core coincidence technical lemma, \cite{1410.3877})}
For every interval game $(N,w)$ we have $\gen(\mathcal{C}(w)) = \mathcal{SC}(w)$, if and only if
for every $x \in \mathcal{SC}(w)$, there exist nonnegative vectors $l^{(x)}$ and $u^{(x)}$, such that
\begin{alignat}{3}
  &x(N) - l^{(x)}(N) &&= \underline{w}(N)\textrm{,}\\ 
  &x(N) + u^{(x)}(N) &&= \overline{w}(N)\textrm{,}\\
  &x(S) - l^{(x)}(S) &&\ge \underline{w}(S),\ \forall S \in 2^N,\\ 
  &x(S) + u^{(x)}(S) &&\ge \overline{w}(S),\ \forall S \in 2^N.
\end{alignat}
\end{theorem}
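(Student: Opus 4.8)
The plan is to prove both directions by translating the set-membership statements into systems of linear (in)equalities and then using the slack vectors $l^{(x)}$ and $u^{(x)}$ as explicit certificates. First recall what the two sides mean. A point $x \in \mathbb{R}^N$ lies in $\mathcal{SC}(w)$ exactly when there is a selection $v \in \Sel(w)$ with $x \in C(v)$, i.e.\ $x(N) = v(N)$ and $x(S) \ge v(S)$ for all $S$, where $\underline{w}(S) \le v(S) \le \overline{w}(S)$ for every $S$. On the other side, $x \in \gen(\mathcal{C}(w))$ means there is an interval vector $(I_1,\dots,I_n) \in \mathcal{C}(w)$ with $x_i \in I_i$ for each $i$. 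Writing $I_i = [\underline{I_i}, \overline{I_i}]$, the defining conditions of $\mathcal{C}(w)$ unpack (using the definitions of $+$, $\succeq$ and the interval core) into $\sum_i \underline{I_i} = \underline{w}(N)$, $\sum_i \overline{I_i} = \overline{w}(N)$, and for every nonempty $S$ the two bound inequalities $\sum_{i \in S}\underline{I_i} \ge \underline{w}(S)$ and $\sum_{i \in S}\overline{I_i} \ge \overline{w}(S)$, together with $x_i \in I_i$.

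The key observation is that the slacks in the technical lemma are precisely the gaps between $x$ and the lower/upper endpoints of a witnessing interval vector. Concretely, I would set $l^{(x)}_i \coloneqq x_i - \underline{I_i}$ and $u^{(x)}_i \coloneqq \overline{I_i} - x_i$. Since $x_i \in I_i = [\underline{I_i},\overline{I_i}]$, both of these are nonnegative, which is the required sign condition. With this substitution, $x(N) - l^{(x)}(N) = \sum_i \underline{I_i} = \underline{w}(N)$ gives equation~(1); $x(N) + u^{(x)}(N) = \sum_i \overline{I_i} = \overline{w}(N)$ gives~(2); and for each $S$, $x(S) - l^{(x)}(S) = \sum_{i \in S}\underline{I_i} \ge \underline{w}(S)$ and $x(S) + u^{(x)}(S) = \sum_{i \in S}\overline{I_i} \ge \overline{w}(S)$ give~(3) and~(4). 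So the forward direction ($\gen(\mathcal{C}(w)) = \mathcal{SC}(w)$ implies existence of the slacks) reduces to: given $x \in \mathcal{SC}(w)$, use the hypothesized equality to obtain a witnessing $(I_1,\dots,I_n) \in \mathcal{C}(w)$ covering $x$, then read off $l^{(x)},u^{(x)}$ as above.

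For the converse, suppose every $x \in \mathcal{SC}(w)$ admits such slacks. I reverse the construction: given $x$ and its vectors $l^{(x)}, u^{(x)}$, define the interval vector by $I_i \coloneqq [x_i - l^{(x)}_i,\, x_i + u^{(x)}_i]$. Nonnegativity of the slacks guarantees $x_i \in I_i$, so $x \in \gen(\{(I_1,\dots,I_n)\})$, and equations~(1)--(4) are exactly the conditions needed to verify $(I_1,\dots,I_n) \in \mathcal{C}(w)$ (the two equalities ensure the endpoints sum correctly to $\underline{w}(N)$ and $\overline{w}(N)$, hence $\sum_i I_i = w(N)$, and the inequalities ensure $\sum_{i\in S} I_i \succeq w(S)$; the singleton conditions $I_i \succeq w(i)$ follow from the $S = \{i\}$ instances of~(3) and~(4)). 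This shows $x \in \gen(\mathcal{C}(w))$, giving $\mathcal{SC}(w) \subseteq \gen(\mathcal{C}(w))$. Combined with the inclusion $\gen(\mathcal{C}(w)) \subseteq \mathcal{SC}(w)$, which holds unconditionally for every interval game and which I would establish once at the start, this yields the desired equality.

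The main obstacle I anticipate is the unconditional inclusion $\gen(\mathcal{C}(w)) \subseteq \mathcal{SC}(w)$: given an interval-core element covering $x$, one must exhibit an explicit selection $v \in \Sel(w)$ with $x \in C(v)$. The natural candidate is something like $v(S) \coloneqq \sum_{i\in S}\underline{I_i}$ for proper coalitions and $v(N) \coloneqq x(N)$, but one must check simultaneously that $v(S) \in [\underline{w}(S), \overline{w}(S)]$ (so that $v$ is a genuine selection) and that $x(S) \ge v(S)$ with $x(N) = v(N)$ (so that $x \in C(v)$). Verifying that these two requirements can be met together — balancing the selection constraint against the core constraint across all coalitions at once — is the delicate part, and it is presumably where the asymmetry between needing only $\ge$ on proper coalitions but $=$ on the grand coalition has to be handled carefully. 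Everything else is the routine endpoint bookkeeping sketched above.
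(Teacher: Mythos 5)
First, a remark on the comparison itself: the paper you are being checked against does not prove this statement at all --- it quotes it as a known result with a citation to \cite{1410.3877}, so your proposal can only be measured against that reference and on its own merits. On its merits, the core of your argument is sound and matches the standard proof: reading off the slacks $l^{(x)}_i \coloneqq x_i - \lb{I_i}$, $u^{(x)}_i \coloneqq \ub{I_i} - x_i$ from a witnessing interval-core element, and conversely assembling $I_i \coloneqq [x_i - l^{(x)}_i,\, x_i + u^{(x)}_i]$ from the slacks, is exactly the content of the lemma; your verification that conditions (1)--(4) are precisely the interval-core conditions $\sum_i I_i = w(N)$ and $\sum_{i \in S} I_i \succeq w(S)$ is complete and correct.

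The gap is in the one step you defer to the end, the unconditional inclusion $\gen(\mathcal{C}(w)) \subseteq \mathcal{SC}(w)$, and the candidate you propose for it would fail. You suggest $v(S) \coloneqq \sum_{i \in S} \lb{I_i}$ for proper coalitions, but nothing in the interval-core conditions bounds this quantity above by $\ub{w}(S)$, so $v$ need not be a selection. Concretely: let $N = \{1,2\}$, $w(1) = w(2) = [0,1]$, $w(N) = [10,10]$; then $([5,5],[5,5]) \in \mathcal{C}(w)$, yet $\lb{I_1} = 5 > 1 = \ub{w}(1)$, so your $v$ assigns the coalition $\{1\}$ a value outside $w(1)$. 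The repair is to stop tracking the interval vector on proper coalitions altogether: set $v(S) \coloneqq \lb{w}(S)$ for $S \neq N$ and $v(N) \coloneqq x(N)$. This $v$ is a genuine selection, because $\lb{w}(N) = \sum_i \lb{I_i} \le x(N) \le \sum_i \ub{I_i} = \ub{w}(N)$; and $x \in C(v)$, because for every proper $S$ one has $x(S) \ge \sum_{i \in S} \lb{I_i} \ge \lb{w}(S) = v(S)$ (the first inequality from $x_i \in I_i$, the second from $\sum_{i\in S} I_i \succeq w(S)$), while $x(N) = v(N)$ holds by construction. This also resolves the asymmetry you worried about --- the $\ge$ on proper coalitions versus $=$ on the grand coalition is handled exactly by this choice of $v$. (This inclusion is in essence the result of \cite{BH15} that the present paper invokes for its Proposition on games with empty selection core.) With that replacement, your proof is complete.
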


\begin{theorem}
	There are infinitely many non-degenerated core-coincident interval games.
\end{theorem}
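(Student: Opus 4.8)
The plan is to prove the statement constructively: I would exhibit one explicit non-degenerated interval game that is core-coincident, verify it through the technical lemma (Theorem~\ref{thm:technical}), and then upgrade a single example to an infinite family by a scaling argument. A convenient candidate lives on a small player set where the selection core is transparent: take $N=\{1,2\}$ with degenerate singletons $w(1)=w(2)=[0,0]$ and a genuinely uncertain grand coalition $w(N)=[a,b]$ where $0\le a<b$. This game is non-degenerated because $|w|(N)=b-a>0$.

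First I would compute the selection core explicitly. Every selection $v$ has $v(1)=v(2)=0$ and $v(N)=c$ for some $c\in[a,b]$, and the classical core of such a selection is $\{x\ge 0:\ x_1+x_2=c\}$. Taking the union over all selections (Definition~\ref{def:sc}) yields
\[
\mathcal{SC}(w)=\bigl\{x\in\mathbb{R}^2:\ x\ge 0,\ a\le x_1+x_2\le b\bigr\}.
\]

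The crux of the argument is then to apply Theorem~\ref{thm:technical}: for an arbitrary $x\in\mathcal{SC}(w)$ I must produce nonnegative slack vectors $l^{(x)}$ and $u^{(x)}$ satisfying the four conditions. The two equalities at $N$ force the totals $l^{(x)}(N)=x_1+x_2-a$ and $u^{(x)}(N)=b-(x_1+x_2)$, both nonnegative precisely by the description of $\mathcal{SC}(w)$. The coalitional inequalities at the singletons reduce to $l^{(x)}_i\le x_i$ on the lower side, while on the upper side they are vacuous since $\overline{w}(i)=0$ and $x,u^{(x)}\ge 0$. Hence $u^{(x)}$ can be any nonnegative split of its total, and $l^{(x)}$ exists because the required total $x_1+x_2-a$ lies in $[0,x_1+x_2]$, so it admits a nonnegative split respecting $l^{(x)}_i\le x_i$ (for instance proportional to the coordinates of $x$). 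This establishes $\gen(\mathcal{C}(w))=\mathcal{SC}(w)$, i.e.\ core-coincidence. I expect this feasibility step to be the main obstacle in general: for a richer game the two equality constraints at the grand coalition must be reconciled simultaneously with the inequality constraints over all coalitions, which is a genuine transportation-type feasibility question; the chosen example is engineered (degenerate singletons, uncertainty only at $N$) precisely so that these constraints decouple.

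Finally, to pass from one example to infinitely many I would observe that scaling preserves both properties: for $\lambda>0$ one checks $\mathcal{SC}(\lambda w)=\lambda\,\mathcal{SC}(w)$ and $\mathcal{C}(\lambda w)=\lambda\,\mathcal{C}(w)$, that $\gen$ commutes with positive scaling, and that $|\lambda w|(N)=\lambda(b-a)>0$ keeps the game non-degenerated while distinguishing the games for distinct $\lambda$. Thus $\{(N,\lambda w):\lambda>0\}$ is an infinite family of non-degenerated core-coincident games, proving the theorem. (Equivalently, fixing $a$ and letting $b$ range over $(a,\infty)$ already yields such a family, and as a remark one could note that Proposition~\ref{prop:empty} gives an alternative infinite supply via non-degenerated games with empty selection core.)
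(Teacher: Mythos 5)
Your proof is correct and follows essentially the same strategy as the paper's: both construct an explicit family of games whose only non-degenerate interval sits at the grand coalition (so the constraints of Theorem~\ref{thm:technical} decouple), exhibit the slack vectors $l^{(x)}$ and $u^{(x)}$ for every $x$ in the selection core, and obtain infinitely many games by varying a real parameter. The paper works on an arbitrary player set with proper coalitions valued $1/|S|$ and grand coalition $[|N|,|N|+b]$, while you use a two-player game with zero singletons, but the underlying idea is identical.
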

\begin{proof}
	Define a game $(N,w_A)$, $w_A(S) \coloneqq 1/|S|$, if $S \neq N$, and further
	$w_A(N) \coloneqq [|N|,|N| + b], b > 0, b \in \mathbb R$. 

	Clearly, $\mathcal{C}(w_A)$ consists exactly of vectors
	$x$, such that $x(N) \in w_A(N)$, and $x_i \geq 1, \forall i \in N$.

	Take any such vector $x$. Define $l^{(x)}_i \coloneqq x(i) - 1$, and
	$u^{(x)}_1 \coloneqq x(1) + \ub{w_A}(N) - x(N)$, and $u^{(x)}_i \coloneqq x(i)$,
	for every $i \in N, i \neq 1$. It is now straightforward to check that all inequalities
	of Theorem \ref{thm:technical} hold and, therefore, this game is core-coincident.
	\qed	
\end{proof}

\subsection{Negative results}

\begin{theorem} \label{thm:coincidence}
	Let $(N,w)$ be an interval game such that:
	\begin{itemize}
	\item a game $(N,u)$, defined by
	$$
u(S) \coloneqq \begin{cases} \ub w (S) &\mbox{if } S = N, \\ 
\lb w (S) & \mbox{if } S \neq N, \end{cases}
	$$
	has a nonempty core, and
	\item $C(\ub w) \neq \emptyset$.
	\end{itemize}
	Then $(N,w)$ is not core-coincident.
\end{theorem}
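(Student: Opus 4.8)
The plan is to argue through the core coincidence technical lemma (Theorem~\ref{thm:technical}): to show that $(N,w)$ is not core-coincident it is enough to exhibit a single point $x \in \mathcal{SC}(w)$ for which the system of constraints (1)--(4) admits no nonnegative solution $(l^{(x)}, u^{(x)})$. The natural supply of candidate points is the auxiliary game $(N,u)$ itself. First I would verify that $u$ is a selection of $(N,w)$: indeed $u(S) = \lb w(S) \in w(S)$ for every $S \neq N$ and $u(N) = \ub w(N) \in w(N)$, so $u \in \Sel(w)$ and hence $C(u) \subseteq \mathcal{SC}(w)$. By the first hypothesis $C(u) \neq \emptyset$, so every $x \in C(u)$ is a selection core element with the distinguishing feature that $x(N) = u(N) = \ub w(N)$.

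The key observation is that this top value rigidifies the lemma. For any $x \in C(u)$, equation (2) reads $x(N) + u^{(x)}(N) = \ub w(N)$, which forces $u^{(x)}(N) = 0$; since $u^{(x)}$ is nonnegative, this means $u^{(x)} \equiv 0$. Substituting into (4), the entire family of upper constraints collapses to $x(S) \geq \ub w(S)$ for every $S \in 2^N$, that is, to the requirement $x \in C(\ub w)$. Therefore, if a point $x \in C(u)$ fails to lie in $C(\ub w)$, no admissible $u^{(x)}$ exists and the system (1)--(4) is infeasible for that $x$ (the vector $l^{(x)}$ is irrelevant to this obstruction). Any $x \in C(u) \setminus C(\ub w)$ is thus a witness of non-coincidence, and the whole theorem reduces to producing one such point.

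It remains to show that the inclusion $C(\ub w) \subseteq C(u)$, which holds trivially because $\ub w(S) \geq \lb w(S) = u(S)$ for every $S \neq N$, is in fact proper. This is the heart of the matter and where the second hypothesis enters. The plan is to start from a point $y \in C(\ub w)$, which exists since $C(\ub w) \neq \emptyset$, to choose a coalition $S_0$ with $\lb w(S_0) < \ub w(S_0)$ (such a coalition exists whenever the game is non-degenerate, and degenerate games are already core-coincident by the earlier proposition), and to transfer a small amount $\varepsilon > 0$ of payoff from some player $i \in S_0$ to some player $j \notin S_0$. This preserves efficiency $x(N) = \ub w(N)$ and lowers $x(S_0)$ to $y(S_0) - \varepsilon$, driving it below $\ub w(S_0)$ and so out of $C(\ub w)$; the task is to choose $i,j$ and $\varepsilon$ so that the perturbed vector still satisfies $x(S) \geq \lb w(S)$ for every coalition, i.e.\ still lies in $C(u)$.

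The main obstacle is exactly this feasibility balance. Lowering the worth of the coalitions that contain $i$ but not $j$ must not breach any lower bound $\lb w(S)$; since $y \in C(\ub w)$ gives $y(S) \geq \ub w(S) \geq \lb w(S)$, each such coalition has slack $y(S) - \lb w(S) \geq |w|(S) \geq 0$, so for small $\varepsilon$ the only danger sits at coalitions that are simultaneously tight and degenerate. The delicate point is the resulting tension on $\varepsilon$: it must be large enough to push $x(S_0)$ strictly below $\ub w(S_0)$ yet small enough to respect every lower-bound constraint, and a safe transfer pair $i,j$ must be shown to exist. Equivalently, one may minimise $x(S_0)$ over the polytope $C(u)$ and argue that the minimum dips below $\ub w(S_0)$; either way, controlling the remaining coalition constraints while separating $C(u)$ from $C(\ub w)$ is where the real work lies. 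I would finally remark that $C(\ub w) \neq \emptyset$ is what places us in the genuinely nontrivial regime, since if $C(\ub w)$ were empty while $C(u)$ is not, every $x \in C(u)$ would violate $x \in C(\ub w)$ outright and non-coincidence would follow immediately.
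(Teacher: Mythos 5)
Your first two paragraphs are sound and are in fact the same engine that drives the paper's own proof: $u \in \Sel(w)$, so $C(u) \subseteq \mathcal{SC}(w)$; for $x \in C(u)$ equation (2) of Theorem~\ref{thm:technical} forces $u^{(x)}(N)=0$, hence $u^{(x)} \equiv 0$, and the constraints (4) collapse to $x \in C(\ub w)$; therefore any point of $C(u) \setminus C(\ub w)$ witnesses failure of core coincidence. The genuine gap is that you never produce such a point. Everything after the reduction is a plan, and you concede that separating $C(u)$ from $C(\ub w)$ is ``where the real work lies'' --- but that separation \emph{is} the theorem; the bookkeeping around Theorem~\ref{thm:technical} is the routine part. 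The paper does execute this step, via a case distinction on the excesses $e(x,S) = x(S) - \lb w(S)$: either there exist $y \in C(u)$ and a player $j$ with $e(y,S) > 0$ for every proper coalition $S \ni j$, in which case transferring the minimal excess $m$ from $j$ to a player outside a minimizing coalition $M$ yields $y' \in C(u)$ with $y'(M) = \lb w(M)$, hence $y' \notin C(\ub w)$; or else every $x \in C(u)$ is lower-tight at some proper coalition through every player, which the paper argues is incompatible with $C(\ub w) \neq \emptyset$. Your $\varepsilon$-transfer sketch is essentially the first case, left unexecuted.

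Moreover, the missing step cannot be completed from the stated hypotheses alone, so no amount of care in choosing $i$, $j$ and $\varepsilon$ will finish your argument (and, it should be said, the paper's own proof suffers from the same tacit assumption). The danger you flag --- coalitions that are simultaneously tight and degenerate --- is fatal: if every proper coalition has a degenerate interval, then $u = \ub w$, hence $C(u) = C(\ub w)$ and no witness exists. Concretely, take $N = \{1,2\}$, $w(1) = w(2) = [0,0]$, $w(N) = [0,1]$: both hypotheses of Theorem~\ref{thm:coincidence} hold, yet $\gen(\mathcal{C}(w)) = \mathcal{SC}(w) = \{x \geq 0 : x_1 + x_2 \leq 1\}$, so this game is core-coincident. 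Note also that your appeal to the proposition on degenerated games points the wrong way: the game above is not degenerated (its length game is nonzero at $N$), so that proposition does not cover it; and if a game satisfying the hypotheses \emph{were} degenerated, it would refute the theorem rather than be ``handled elsewhere.'' What is actually needed is a non-degeneracy hypothesis such as $|w|(S) > 0$ on the relevant proper coalitions --- exactly what the paper's corollaries reinstate, and what the paper uses silently when it concludes that a lower-tight vector cannot lie in $C(\ub w)$. In short: your reduction agrees with the paper's, but the decisive separation step is absent, and under the hypotheses as literally stated it is not merely absent but impossible.
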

\begin{proof}
	We define an excess function as $e(x,S) \coloneqq x(S) - w(S)$.

	If for every $x \in C(u)$, and every player $i \in N$, there is a coalition
	$S \in 2^N \setminus N, i \in S$, such that $e(x,S) = 0$, then we claim that the core of the upper border game $\ub w$ of $w$ is empty.

	To see this, observe that $C(\ub w) \subseteq C(u)$. But, if
	every $x \in C(u)$ has the aforementioned property then, by Theorem \ref{thm:technical}
	none of those $x$ can be in $C(\ub w)$; a contradiction with $C(\ub w) \neq \emptyset$.

	So the other option is that there exists a vector $y \in C(u)$, and
	a player $j \in N$, such that for every $S \in 2^N \setminus N, j \in S$, $e(x,S) > 0$.
	We define $$m \coloneqq \min_{S \in 2^N \setminus N, j \in S} e(x,S),$$
	and $M$ the set on which this minimum is attained. We pick an arbitrary player $j' \in N \setminus M$. Such player must exist.
	Then we construct a new vector $y'$:
		$$
y'_k \coloneqq \begin{cases} \lb{y_k} - m &\mbox{if } k = j,  \\ 
\lb{y_k} + m & \mbox{if } k = j', \\
\lb{y_k} & \mbox{else}. \end{cases}
	$$
	It can be checked that $y' \in C(u)$  and by a similar argument as in the previous
	case, $y'$ does not satisfy the mixed system of inequalities in Theorem \ref{thm:technical} and we are done.
	\qed
\end{proof}

This theorem has several important corollaries.

\begin{corollary}
	Every interval game $(N,w) \in \textrm{BIG}^N$ with $|w|(S) > 0$ for every $S \in 2^N$,
	is not core-coincident.
\end{corollary}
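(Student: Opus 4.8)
The plan is to derive this corollary directly from Theorem~\ref{thm:coincidence} by verifying that its two hypotheses are automatically satisfied for any strongly balanced game whose length game is strictly positive everywhere. Recall that the conclusion we want is that such a game is \emph{not} core-coincident, which is exactly the conclusion of Theorem~\ref{thm:coincidence}. So it suffices to check the two bullet-point premises of that theorem: first, that the auxiliary game $(N,u)$ (which takes the upper value on the grand coalition and the lower value on all proper coalitions) has a nonempty core, and second, that $C(\ub{w}) \neq \emptyset$.

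For the second premise, I would argue as follows. Since $(N,w) \in \mathrm{BIG}^N$ is strongly balanced, \emph{every} selection of $w$ has a nonempty core. The upper border game $\ub{w}$ is itself a selection of $w$ (as noted right after the definition of selection in the preliminaries), so $C(\ub{w}) \neq \emptyset$ follows immediately. This is the easy half and requires no computation.

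For the first premise I need $C(u) \neq \emptyset$, where $u$ agrees with the lower border game $\lb{w}$ on all proper coalitions and equals $\ub{w}(N) = \lb{w}(N) + |w|(N)$ on the grand coalition. The natural route is to compare $u$ with the lower border game $\lb{w}$, which is a selection of $w$ and hence has a nonempty core by strong balancedness. The key observation is that $u$ is obtained from $\lb{w}$ by \emph{raising only the value of the grand coalition} (by the amount $|w|(N) > 0$) while leaving every proper coalition untouched. Raising $v(N)$ in a game can only enlarge the core — concretely, if $z \in C(\lb{w})$ then one can distribute the extra surplus $|w|(N)$ to obtain a point of $C(u)$: take any $z \in C(\lb{w})$ and add the positive surplus to a single player's payoff, say set $z'_1 \coloneqq z_1 + |w|(N)$ and $z'_i \coloneqq z_i$ otherwise. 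Then $z'(N) = \lb{w}(N) + |w|(N) = u(N)$, and for every proper coalition $S$ we have $z'(S) \ge z(S) \ge \lb{w}(S) = u(S)$, so $z' \in C(u)$. Hence $C(u) \neq \emptyset$.

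With both premises of Theorem~\ref{thm:coincidence} verified, that theorem yields directly that $(N,w)$ is not core-coincident, completing the proof. I do not anticipate any genuine obstacle here: the corollary is essentially a clean specialization, and the only point requiring a moment's care is confirming that the positivity hypothesis $|w|(S) > 0$ for all $S$ — in particular $|w|(N) > 0$ — is what guarantees that the surplus transferred in constructing $z'$ is strictly positive, and more importantly that $w$ is genuinely non-degenerate so that Theorem~\ref{thm:coincidence} is being applied in a non-vacuous regime. The strict positivity on \emph{all} coalitions is stronger than what the core-nonemptiness arguments strictly need, but it is exactly the hypothesis under which Theorem~\ref{thm:coincidence}'s negative conclusion is meaningful.
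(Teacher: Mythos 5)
Your proof is correct and follows essentially the same route as the paper's (implicit) one: the corollary is a direct specialization of Theorem~\ref{thm:coincidence}, with both of its premises supplied by strong balancedness. The only simplification you missed is that the auxiliary game $u$ is itself a selection of $w$ (since $u(N)=\ub{w}(N)\in w(N)$ and $u(S)=\lb{w}(S)\in w(S)$ for $S\neq N$), so $C(u)\neq\emptyset$ follows immediately from the definition of $\mathrm{BIG}^N$, making your surplus-transfer construction from $C(\lb{w})$ unnecessary (though valid).
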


\begin{corollary}
	Classes $\textrm{SeCIG}^N$ and $\textrm{CIG}^N$ are not core-coincident for $|N| > 1$.
	Furthermore, every game in $\textrm{SeCIG}^N \cup \textrm{CIG}^N$ with every interval
	non-degenerated and $|N| > 1$ is not core-coincident.
\end{corollary}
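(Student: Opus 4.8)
The plan is to derive both assertions from Theorem \ref{thm:coincidence}: it suffices to check that every non-degenerated game $(N,w) \in \mathrm{SeCIG}^N \cup \mathrm{CIG}^N$ with $|N| > 1$ satisfies the two hypotheses of that theorem, namely $C(\ub{w}) \neq \emptyset$ and $C(u) \neq \emptyset$, where $u(N) \coloneqq \ub{w}(N)$ and $u(S) \coloneqq \lb{w}(S)$ for $S \neq N$. The first sentence of the corollary then follows from the ``furthermore'' part, provided each class contains at least one game all of whose nonempty coalitions carry non-degenerate intervals; for $|N| = 2$ one checks directly that $w(1) = w(2) = [0,1],\ w(N) = [3,4]$ lies in $\mathrm{SeCIG}^N$ (and not $\mathrm{CIG}^N$), while $w(1) = w(2) = [0,1],\ w(N) = [0,2]$ lies in $\mathrm{CIG}^N$ (and not $\mathrm{SeCIG}^N$), so neither class can be core-coincident. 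The restriction $|N| > 1$ is inherited from Theorem \ref{thm:coincidence}, whose construction needs a second player to move weight onto.

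First I would verify $C(\ub{w}) \neq \emptyset$. If $(N,w) \in \mathrm{SeCIG}^N$, then the upper border game $\ub{w}$ is a selection of $w$, hence convex, and Theorem \ref{thm:convex} gives a nonempty core. If $(N,w) \in \mathrm{CIG}^N$, then $\ub{w}$ is a border game and therefore convex by the very definition of a convex interval game, so again $C(\ub{w}) \neq \emptyset$.

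Next I would verify $C(u) \neq \emptyset$. For $(N,w) \in \mathrm{SeCIG}^N$ this is immediate: $u$ is itself a selection of $w$, since it takes the upper endpoint on $N$ and the lower endpoint on every proper coalition, so $u$ is convex and has a nonempty core. For $(N,w) \in \mathrm{CIG}^N$ the game $u$ need not be convex, so instead I would start from the convex border game $\lb{w}$, which has a core element $z$ by Theorem \ref{thm:convex}, and then redistribute the surplus $\delta \coloneqq \ub{w}(N) - \lb{w}(N) \geq 0$ onto a single player, setting $z'_1 \coloneqq z_1 + \delta$ and $z'_i \coloneqq z_i$ otherwise. Then $z'(N) = \lb{w}(N) + \delta = \ub{w}(N) = u(N)$, while $z'(S) \geq z(S) \geq \lb{w}(S) = u(S)$ for every proper $S$, so $z' \in C(u)$.

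Once both hypotheses are in place, Theorem \ref{thm:coincidence} directly gives that $(N,w)$ is not core-coincident, which is the ``furthermore'' statement, and the first sentence follows as explained above. The only genuinely non-routine step is the construction of a core element of $u$ in the convex-interval case, where the natural candidate $u$ is not guaranteed to be convex and one must instead lift a core element of $\lb{w}$. The second point requiring care is the non-degeneracy assumption: it is exactly what makes Theorem \ref{thm:coincidence} usable here, since the first case of its proof rules out a point with a tight lower-bound coalition lying in $C(\ub{w})$ by invoking the strict inequality $\lb{w}(S) < \ub{w}(S)$ on proper coalitions, and without it the game could degenerate into a core-coincident one.
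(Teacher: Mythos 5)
Your proof is correct and takes essentially the same approach as the paper's own (very terse) proof: both reduce the corollary to Theorem \ref{thm:coincidence} and verify its two hypotheses via Theorem \ref{thm:convex}, using for $\mathrm{SeCIG}^N$ that $\ub{w}$ and $u$ are selections and hence convex, and for $\mathrm{CIG}^N$ the convexity of the border and length games; you simply make explicit what the paper leaves implicit (the nonemptiness of $C(u)$ in the CIG case, concrete witnesses for the first sentence, and the role of non-degeneracy in making Theorem \ref{thm:coincidence} applicable). One harmless inaccuracy: for $(N,w) \in \mathrm{CIG}^N$ the game $u$ actually \emph{is} convex, since raising the grand-coalition value of the convex game $\lb{w}$ preserves supermodularity ($u(N)$ can appear on the left of a supermodularity inequality only when it also appears on the right), so your lifting of a core element of $\lb{w}$, though perfectly valid, is not forced.
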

\begin{proof}
	Theorem \ref{thm:convex} implies that selection convex games are totally balanced.
	In \cite{alparslan2009convex}, it is proved that a game is convex interval game
	if and only if its lower border game and its length game are convex. 
	This completes the proof.
	\qed
\end{proof}

Observe that $\textrm{SeCIG}^N \subseteq \textrm{SeSIG}^N \subseteq \textrm{SeMIG}^N$,
so we immediately obtain that all these sets are
not core-coincident as well for nontrivial player sets. Also,  $\textrm{CIG}^N \subseteq \textrm{SIG}^N$, so superadditive interval games are not core-coincident either.

From this we conclude that selection core and interval core behave differently on many important and widely used classes
with nontrivial uncertainty. Therefore, to further develop theory and solve problems
regarding both versions of cores of interval games is an important task.

\section{The Shapley value}
\label{sec:shapley}

\paragraph{Preliminaries and definitions.} Before we list the axioms we need in this section, a few definitions are needed.

Every function $f: IG^N \to \mathbb{IR}^N$ is called \emph{interval value function}. We
omit \emph{interval} when context is clear.

\begin{itemize}
	\item Two intervals $I, J$ are said to be \emph{indifferent}
	if $\frac{\lb I + \ub I}{2} = \frac{\lb J + \ub J}{2}$. We denote it by $I \sim J$.

	\item Let $(N,w) \in IG^N$ and $i \in N$. Then, $i$ is called a \emph{null player}
	in $w$ if $w(S) = w(S \cup i)$ for every $S \subseteq N \setminus \{i \}$.

	\item Let $(N,w) \in IG^N$ and $i \in N$. Then, $i$ is
	called a \emph{total null player} in $w$ if $w(S) \ominus w(S \cup i) = [0,0]$
	for every $S \subseteq N \setminus \{ i \}$. In other words, $i$ is 
	total null player if it is a dummy player in every selection.

	\item Let $(N,w) \in IG^N$ and $i,j \in N$. Then, $i$ and $j$ are \emph{symmetric}
	players in $(N,w)$, if $w(S \cup i) = w(S \cup j)$ for every $S \subseteq N \{ i,j \}$.
\end{itemize}

We can state a few axioms.

\begin{enumerate}
	\item Indifference efficiency (IEFF): $\sum_{i \in N} \Psi_i(w) \sim w(N)$
	for all $(N,w) \in IG^N$.

	\item Efficiency (EFF): $\sum_{i \in N} \Psi_i(w) = w(N)$
	for all $(N,w) \in IG^N$.

	\item Indifference null player property (INP): There exists a unique
	$t \ge 0$ such that $\Psi_i(w) = [-t,t]$ for any $(N,w) \in IG^N$
	and all null players $i$ in $(N,w)$.
	
	\item Total null player property (TNP): For every total null player
	$i$ in a game $(N,w)$, $\Psi_i(w)=[0,0]$.

	\item Symmetry (SYM): $\Psi_i(w) = \Psi_j(w)$ for all $(N,w) \in IG^N$
	and all symmetric players $i$ and $j$ in $(N,w)$.

	\item Additivity (ADD): $\Psi(v+w) = \Psi(v) + \Psi(w)$ for all
	$(N,v),(N,w) \in IG^N$ with $(N,v+w) \in IG^N$.
\end{enumerate}

\begin{definition} \label{def:ext} The \emph{interval Shapley value extension} is
a value function $\Phi^*: IG^N \to \mathbb{IR}^N$,
$$\Phi^*_i(w) \coloneqq \sum_{S \subseteq N\setminus \{i\}} \frac{|S|!(n-|S|-1)!}{n!} (w(S \cup i) \ominus w(S)).$$
\end{definition}

\begin{theorem} \label{thm:cinani} \cite{cinani} The function $\Phi^*$ satisfies
axioms IEFF, INP, SYM, and ADD. Furthermore, it is the only function
satisfying these axioms.
\end{theorem}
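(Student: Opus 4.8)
The plan is to prove the two halves separately: first that $\Phi^*$ satisfies the four axioms, then that the axioms leave no freedom. Throughout I write $c_S = \frac{|S|!(n-|S|-1)!}{n!}$ for the Shapley weights, let $m(w)$ denote the \emph{midpoint game} $m(w)(S) = \tfrac12(\underline{w}(S) + \overline{w}(S))$, and keep the length game $(N,|w|)$ in mind. The guiding observation is that Moore subtraction interacts cleanly with midpoints and lengths: the midpoint of each marginal interval $w(S\cup i)\ominus w(S)$ is $m(w)(S\cup i) - m(w)(S)$, while its length is $|w|(S\cup i) + |w|(S)$.

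For existence I would argue as follows. The midpoint remark gives that the midpoint of $\Phi^*_i(w)$ equals the classical Shapley value $\phi_i(m(w))$; summing over $i$ and using classical efficiency yields $\sum_i \Phi^*_i(w) \sim w(N)$, which is IEFF. For INP, if $i$ is a null player then $w(S\cup i) = w(S)$ forces $w(S\cup i)\ominus w(S) = [-|w|(S), |w|(S)]$, and a nonnegative combination of symmetric intervals is symmetric, so $\Phi^*_i(w) = [-t,t]$ with $t = \sum_S c_S |w|(S) \ge 0$. SYM follows from the bijection between subsets of $N\setminus\{i\}$ and of $N\setminus\{j\}$ that interchanges the roles of $i$ and $j$: the defining equality of symmetric players makes corresponding marginal intervals and their weights coincide. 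Finally ADD reduces to the identity $(v+w)(S\cup i)\ominus(v+w)(S) = (v(S\cup i)\ominus v(S)) + (w(S\cup i)\ominus w(S))$, immediate from the definitions of $\ominus$ and of interval addition, together with the fact that multiplication by a nonnegative scalar distributes over interval addition.

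For uniqueness, let $\Psi$ satisfy IEFF, INP, SYM and ADD. I would split every interval payoff into its midpoint and its length and track them separately, using that interval addition adds midpoints and adds lengths; hence both the midpoint map $\mu_i(w)$ and the length map $\lambda_i(w)\ge 0$ are additive by ADD. The first key step is the degenerate games: writing $d_c$ for the degenerate game of a classical game $c$, additivity gives $\lambda_i(d_c) + \lambda_i(d_{-c}) = \lambda_i(d_0) = 0$, and since both terms are nonnegative, $\lambda_i(d_c) = 0$. Thus $\Psi$ is degenerate on degenerate games, and its restriction there satisfies exactly the four classical Shapley axioms (a null player now gets $[0,0]$, so its midpoint vanishes), whence by Shapley's uniqueness theorem it equals $\phi_i(c)$. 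Decomposing an arbitrary game as $w = w^{\mathrm{mid}} + w^{\mathrm{len}}$, where $w^{\mathrm{mid}}$ is the degenerate game of $m(w)$ and $w^{\mathrm{len}}(S) = [-\tfrac12|w|(S), \tfrac12|w|(S)]$, additivity reduces everything to determining $\Psi$ on the centered games $w^{\mathrm{len}}$. For the midpoint there I would run the classical unanimity argument on centered unanimity games: players outside the carrier are null, so INP makes their midpoints zero, while the carrier players are symmetric, so SYM and IEFF force their midpoints equal with vanishing sum, hence all zero; extending additively over the unanimity basis gives $\mu_i(w) = \phi_i(m(w))$ on all games.

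The main obstacle is pinning the length component $\lambda_i$ on the centered games, since IEFF constrains only midpoints and therefore imposes no efficiency-type bound on total length. Here I would lean on INP in its full strength: it fixes a null player's interval to a specific symmetric interval $[-t,t]$, and combined with SYM (symmetric players receive equal length) and additivity over a spanning family of centered length games this should determine $\lambda_i$ coalition by coalition and match the length $\phi_i(|w|) + 2\sum_{S\subseteq N\setminus\{i\}} c_S |w|(S)$ realized by $\Phi^*$. Showing that the null-player axiom together with symmetry and additivity leaves no residual freedom in the length coordinate — equivalently, that the cone of length games is generated by configurations on which INP and SYM act rigidly enough — is the delicate point, and it is where I expect the argument to demand the most care.
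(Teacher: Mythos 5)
First, a point of reference: the paper does not actually prove this theorem --- it is quoted from \cite{cinani} with no in-paper argument --- so your attempt has to be judged on its own merits rather than against an internal proof.

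Your existence half is correct. The midpoint/length bookkeeping for Moore subtraction (midpoints subtract, lengths add) is exactly the right tool: it yields IEFF via classical efficiency of $\phi$ applied to the midpoint game, it gives a null player the interval $[-t,t]$ with $t=\sum_{S\subseteq N\setminus\{i\}} c_S\,|w|(S)$ (where $c_S=|S|!(n-|S|-1)!/n!$) for INP, and SYM and ADD follow from the usual bijection argument and from distributivity of nonnegative scaling over interval addition. Your opening moves for uniqueness are also sound: $\Psi(d_0)=[0,0]$ by ADD, nonnegativity of lengths forces $\Psi$ to be degenerate on degenerate games, and Shapley's classical uniqueness then identifies it with $\phi$ there.

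The genuine gap is precisely the step you flag and postpone: pinning down the length component on centered games. Under the axioms as stated in this paper, that step is not merely delicate --- it is impossible, because these axioms do not constrain the lengths at all. Concretely, put $\epsilon(w) \coloneqq \sum_{S\subseteq N}|w|(S)$ and define $\Psi_i(w) \coloneqq \Phi^*_i(w) + [-\epsilon(w),\epsilon(w)]$. Then $\Psi\neq\Phi^*$ on every non-degenerated game, yet $\Psi$ satisfies all four axioms: the perturbation is symmetric about zero, so every midpoint is unchanged (IEFF); a symmetric interval plus a symmetric interval is symmetric, and the perturbation is the same for all null players in a given game (INP, in the reading you and $\Phi^*$ itself require); the perturbation is player-independent (SYM); and $\epsilon$ is additive because interval addition adds lengths, while $\Phi^*$ is additive by your own existence argument (ADD). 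The same failure is visible from inside your plan: by cancellativity of interval addition, your reduction lands on the centered unanimity games $S\mapsto[-\tfrac12 u_T(S),\tfrac12 u_T(S)]$, and there IEFF, INP and SYM force nothing beyond every player receiving an interval symmetric about zero --- the radii remain completely free. So uniqueness is false for the axiom system as paraphrased here; the theorem can only hold for the exact formulation of the null-player axiom in \cite{cinani}, which must fix the radius of a null player's interval and not merely its symmetry. Identifying and exploiting that stronger form of INP is the missing idea, and without it no amount of care will close your final step.
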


We now prove an important, yet never noted and proved property of 
the interval Shapley value extension.

\begin{theorem} \label{thm:precisely}
	For every interval game $(N,w) \in IG^N$, we have
	$$\Phi^*_i(w) =  \big\{ \phi_i(v)\, |\, v \in \Sel(w) \big\}.$$
\end{theorem}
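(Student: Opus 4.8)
The plan is to compute both sides explicitly and show they describe the same closed interval. Write $c_S \coloneqq \frac{|S|!(n-|S|-1)!}{n!} \ge 0$ for the Shapley weights. First I would unfold the right-hand side of Definition \ref{def:ext}. By the Moore subtraction of Definition \ref{def:ari}, each summand equals $w(S \cup i) \ominus w(S) = [\lb{w}(S\cup i) - \ub{w}(S),\ \ub{w}(S\cup i) - \lb{w}(S)]$, and since every $c_S$ is nonnegative, scalar multiplication preserves the orientation of the interval while interval addition adds endpoints coordinatewise. Hence
$$\Phi^*_i(w) = \Big[ \textstyle\sum_{S \subseteq N\setminus\{i\}} c_S\big(\lb{w}(S\cup i) - \ub{w}(S)\big),\ \sum_{S \subseteq N\setminus\{i\}} c_S\big(\ub{w}(S\cup i) - \lb{w}(S)\big)\Big].$$
(The lower endpoint does not exceed the upper one because $\lb{w}(T) \le \ub{w}(T)$ for every $T$.)

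Next I would rewrite the classical Shapley value of an arbitrary selection $v \in \Sel(w)$ as a linear functional of the coalition values. Regrouping $\phi_i(v) = \sum_{S \subseteq N\setminus\{i\}} c_S\big(v(S\cup i) - v(S)\big)$ by coalition, each $T \subseteq N$ contributes its value $v(T)$ \emph{exactly once}: with coefficient $+c_{T\setminus\{i\}}$ when $i \in T$ (taking $S = T \setminus \{i\}$), and with coefficient $-c_T$ when $i \notin T$. Thus $\phi_i$ is an affine function of the vector $(v(T))_{T \subseteq N}$ in which each entry appears with a single coefficient whose sign depends only on whether $T$ contains $i$.

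Because a selection is constrained only by the independent conditions $v(T) \in w(T)$, the set of admissible value-vectors is the compact, convex box $\prod_{T} w(T)$, and $v \mapsto \phi_i(v)$ is affine on it. Its image is therefore a closed real interval whose endpoints are attained at the vertices obtained by pushing each $v(T)$ to the endpoint of $w(T)$ favoured by the sign of its coefficient. To minimize $\phi_i$ I would set $v(T) = \lb{w}(T)$ for $T \ni i$ and $v(T) = \ub{w}(T)$ for $T \not\ni i$, which reproduces exactly $\lb{\Phi^*_i(w)}$; reversing the choices yields $\ub{\Phi^*_i(w)}$. These extremal $v$ are genuine selections, since each value lies in its prescribed interval and $v(\emptyset)=0$ is forced by $w(\emptyset)=[0,0]$. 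Finally, by connectedness of the box (equivalently, the intermediate value theorem applied to $t \mapsto \phi_i\big((1-t)v_{\min} + t\, v_{\max}\big)$), every intermediate real value is also attained, so $\{\phi_i(v) : v \in \Sel(w)\}$ is precisely the interval $\Phi^*_i(w)$.

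The hard part is really just the regrouping in the second paragraph: verifying that each coalition value enters $\phi_i$ with a single fixed-sign coefficient is what guarantees the minimizing and maximizing selections can be chosen coordinatewise and independently, and it is exactly this that forces the extreme endpoints of $\{\phi_i(v)\}$ to coincide with those of $\Phi^*_i(w)$. Once that is in place, the remaining work is a routine endpoint identification together with the convexity argument for the values in between.
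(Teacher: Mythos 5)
Your proof is correct and takes essentially the same route as the paper's: the paper's own proof is a two-sentence appeal to the definition of $\Phi^*$ and to Moore's subtraction, and your argument is that appeal made rigorous. The crucial observation you supply --- that each coalition value $v(T)$ enters $\phi_i(v)$ exactly once and with a fixed sign, so the interval-arithmetic sum evaluates the exact range of $\phi_i$ over the box of selections, with endpoints attained at extremal selections and intermediate values covered by convexity --- is precisely what the paper compresses into ``by interval arithmetics, more precisely because of Moore's subtraction.''
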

\begin{proof}
	Every resulting value from $\Phi^*_i(w)$ corresponds
	to some selection of $w$ by Definition \ref{def:ext}
	and interval arithmetics, more precisely because of Moore's subtraction (Definition \ref{def:ari}).
	The converse holds as well. \qed
\end{proof}

In other words, the interval Shapley value extension contains exactly all
possible Shapley values that can be attained when uncertainty is settled.
We find this property very important.

However, as is noted in \cite{cinani}, efficiency is not always satisfied.
Let us explain this issue. From properties of interval arithmetics, we see
that $X - X$ is not equal to $[0,0]$ in general for $X \in  \mathbb{IR}$.
In fact, for every interval $A,B \in \mathbb{IR}$, $|A+B| \geq \min\{|A|,|B|\}$.
An analogous fact holds for Moore's subtraction as well.
Since, by definition of the interval Shapley value extension, in $\sum_{i \in N} \Phi^*_i(w)$
are some intervals added and subtracted multiple times, the resulting value does
not satisfy efficiency if we first compute $\Phi^*_i(w)$ for every $i$, and only then
add them together. This is the reason why EFF is not satisfied in general.
We can first simplify $\sum_{i \in N} \Phi^*_i(w)$ and only then add it
together. Then we would get the efficiency by the same reasoning as we get efficiency for
the Shapley value in classical games.

\paragraph{Another axiomatization.} The following theorem
shows a different axiomatization of
the interval Shapley value extension than \cite{cinani}.
We show that the axiom TNP can be interchanged with the axiom INP, which is, from
our point of view, more natural.

\begin{theorem} \label{thm:axiom}
	There is a unique value function satisfying axioms
	IEFF, TNP, SYM and ADD. Furthermore, it equals $\Phi^*$.
\end{theorem}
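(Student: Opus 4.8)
The plan is to reduce the statement to the axiomatization already available in Theorem~\ref{thm:cinani}. I would establish two facts. First, that $\Phi^*$ satisfies TNP: if $i$ is a total null player then $w(S)\ominus w(S\cup i)=[0,0]$ for every $S\subseteq N\setminus\{i\}$, so by the definition of Moore's subtraction each difference $w(S\cup i)\ominus w(S)$ appearing in Definition~\ref{def:ext} is $[0,0]$, whence $\Phi^*_i(w)=[0,0]$. Since $\Phi^*$ already satisfies IEFF, SYM and ADD by Theorem~\ref{thm:cinani}, this gives the existence half. Second, that any value function $\Psi$ obeying IEFF, TNP, SYM and ADD must also obey INP; granting this, the uniqueness half of Theorem~\ref{thm:cinani} immediately yields $\Psi=\Phi^*$.

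To prove the implication TNP $\Rightarrow$ INP (in the presence of the remaining axioms) I would argue as follows. Let $i$ be a null player of $(N,w)$ and write $\hat w(S)\coloneqq\tfrac12(\lb w(S)+\ub w(S))$ for the center game. Because interval addition adds lower and upper endpoints separately, ADD makes both the center and the radius of $\Psi_i(\cdot)$ additive, and IEFF says the centers sum to $\hat w(N)$. Split $w=D+E$, where $D(S)\coloneqq[\hat w(S),\hat w(S)]$ is degenerate and $E(S)\coloneqq[-\tfrac12|w|(S),\tfrac12|w|(S)]$ is centered at zero; both are legitimate interval games and their sum is $w$. Since $i$ is null in $w$ it is a dummy of $\hat w$, hence a total null player of the degenerate game $D$, so TNP gives $\Psi_i(D)=[0,0]$. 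By additivity it then suffices to show that $\Psi_i(E)$ is symmetric about $0$, which is precisely the content of INP for $i$.

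The step I expect to be the main obstacle is controlling $\Psi$ on the length part $E$, because TNP has essentially no grip there: the Moore condition $w(S)\ominus w(S\cup i)=[0,0]$ forces a total null player to make \emph{all} relevant coalitions degenerate, so TNP constrains $\Psi$ only on degenerate games, whereas $E$ is genuinely non-degenerate. I would handle the center and the radius separately. For the center I would run the classical Shapley uniqueness theorem on the additive, symmetric, center-efficient map $w\mapsto(\text{centers of }\Psi(w))$, feeding in the dummy-gives-zero data that TNP supplies on degenerate games and using the standard ``add a positive multiple and compare'' device to work around the fact that interval games form only a cone and so cannot be freely subtracted; this identifies the center of $\Psi_i$ with $\phi_i(\hat w)$, which is $0$ for the null (hence $\hat w$-dummy) player $i$. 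For the symmetry of $\Psi_i(E)$ about $0$ I would exploit the reflection $\ub w\leftrightarrow-\lb w$, under which $E$ is fixed because it is centered at zero, and combine this self-duality with additivity to force the remaining radius contribution into a symmetric interval. Assembling $\Psi_i(w)=\Psi_i(D)+\Psi_i(E)=[-t,t]$ verifies INP, and Theorem~\ref{thm:cinani} then finishes the proof.
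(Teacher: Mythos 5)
Your overall architecture is the same as the paper's: first check that $\Phi^*$ itself satisfies TNP (your Moore-subtraction computation, showing that $w(S)\ominus w(S\cup i)=[0,0]$ forces $w(S)$ and $w(S\cup i)$ to be equal degenerate intervals, is correct and in fact more explicit than the paper's one-line remark), then argue that any $\Psi$ satisfying IEFF, TNP, SYM, ADD must satisfy INP, and finish by the uniqueness in Theorem~\ref{thm:cinani}. The paper reduces, exactly as you do, to showing that a null player's payoff is an interval symmetric about zero; it additionally disposes of the ``unique $t$'' clause of INP by observing that two null players of one game are symmetric players, so SYM equalizes their intervals --- a clause your write-up never addresses.

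The genuine gap is precisely the step you yourself flag as the main obstacle, and neither of your devices closes it. Your decomposition $w=D+E$ and the conclusion $\Psi_i(D)=[\phi_i(\hat w),\phi_i(\hat w)]$ are fine: on degenerate games the centers of $\Psi$ are additive, symmetric, efficient and vanish on dummies (classical uniqueness), and the radii, being nonnegative and additive on a linear space, must vanish. But this information cannot be transported to $E$: radii are nonnegative and additive, so a sum of interval games is degenerate only if every summand is; hence no additivity relation ever links the non-degenerate game $E$ to the degenerate games where TNP has force, and the ``add a positive multiple and compare'' trick fails because one side of any such comparison is a game about which the axioms say nothing. The reflection idea is vacuous for the same reason: no axiom relates $\Psi(w)$ to $\Psi$ of the reflected game, so invariance of $E$ under reflection imposes no constraint on $\Psi_i(E)$. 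Worse, the implication you need (TNP together with IEFF, SYM, ADD forces INP) is false with the axioms as stated. For $N=\{1,2\}$, $j\neq i$, and $\lambda\neq 0$, put $\Psi_i(w)\coloneqq\Phi^*_i(w)+\bigl[\lambda d_i(w),\lambda d_i(w)\bigr]$ where $d_i(w)\coloneqq |w|(\{i\})-|w|(\{j\})$. This $\Psi$ is additive ($d_i$ is additive), satisfies IEFF (the corrections cancel across players and $\sim$ compares only midpoints), SYM ($w(1)=w(2)$ forces $d_i(w)=0$), and TNP (a total null player forces all relevant coalitions to be degenerate, whence $d_i(w)=0$ and $\Phi^*_i(w)=[0,0]$); yet on the game $w(1)=w(1,2)=[-a,a]$, $w(2)=[0,0]$ it assigns the null player $2$ the interval $[-a-2\lambda a,\,a-2\lambda a]$, which is not centered at zero, so $\Psi\neq\Phi^*$. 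Consequently your plan cannot be completed along the proposed lines; the paper's own proof hides this same difficulty behind ``we omit technical details here,'' so the defect is not yours alone, but the blind attempt does not constitute a proof.
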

\begin{proof}

If a value function satisfies IEFF, INP, SYM, and ADD, then it is equal
to $\Phi^*$. From its formula, we conclude that TNP is satisfied.

Now in the other direction, if a value function satisfies
IEFF, TNP, SYM and ADD we want to show that it satisfies INP as well.

Our goal is to prove that in every game $(N',w')$ with a null player $h$,
$\phi^{**}_h(w')$ is an interval symmetric around zero.

It suffices to prove that:
\begin{itemize}
	\item If $k \in \Phi^{**}_h(w')$, then $-k \in \Phi^{**}_h(w')$, and
	\item if $a < b$, and $a,b \in \Phi^{**}_h(w')$, then $[a,b] \subseteq \Phi^{**}_h(w')$.
\end{itemize}
Both of these claims can be proved by using ADD axiom and the fact, that on
degenerated game, $\Phi^{**}$ coincides with $\phi$. We omit technical details here.

We know that every null player gets a symmetrical interval under a
value function $\phi^{**}$ satisfying IEFF, TNP, SYM and ADD. So the only remaining
option is that there must exist a game in which two null players get a different symmetrical
interval. Let us denote such game as $(N'',w'')$ and the two null players as $i$ and $j$.

Observe from the definition of null player that
$$w''(S \cup i) = w''(S)$$
holds for every $S \subseteq N \setminus \{ i \}$, and thus, specially, for every
$S \subseteq N \setminus \{ i,j \}$.
Following the same reasoning, we arrive on conclusion that
$$w''(S \cup j) = w''(S)$$
holds for every $S \subseteq N \setminus \{ i,j \}$.
Combining this, we get that
$$w''(S \cup j) = w''(S) = w''(S \cup i), \forall S \subseteq N \setminus \{ i,j \}.$$
That means that $i$ and $j$ are symmetrical and from the axiom SYM, $\phi^{**}_i(w'')$
should be equal to $\phi^{**}_j(w'')$, a contradiction.

Finally, we note that the independence of properties IEFF, TNP, SYM, and ADD follows
from Theorem \ref{thm:cinani} and from \cite{peters2015game}.
\qed
\end{proof}

\begin{theorem}
	For every $(N,w) \in \mathrm{SeCIG}$, we have $\gen(\Phi^*(w)) \subseteq \mathcal{SC}(w)$.
\end{theorem}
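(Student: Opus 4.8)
The plan is to reduce the interval statement to a family of classical statements, one per selection, and then invoke the sharpened form of Shapley's convexity theorem. Unfolding the definitions, a vector $x \in \mathbb{R}^N$ lies in $\gen(\Phi^*(w))$ precisely when $x_i \in \Phi^*_i(w)$ for every $i \in N$, and it lies in $\mathcal{SC}(w)$ precisely when there is a \emph{single} selection $v \in \Sel(w)$ with $x \in C(v)$. So the task is: given $x$ with $x_i \in \Phi^*_i(w)$ for all $i$, exhibit one selection whose core contains $x$.

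First I would apply Theorem~\ref{thm:precisely}, which identifies $\Phi^*_i(w)$ with $\{\phi_i(v) \mid v \in \Sel(w)\}$. The natural strategy is to produce a selection $v$ with $x = \phi(v)$. Since $(N,w) \in \mathrm{SeCIG}$, every selection is a convex game; and by the classical refinement of Theorem~\ref{thm:convex} (Shapley, \cite{shapley1971cores}), the Shapley value of a convex game is the barycentre of its marginal vectors and hence lies in its core, i.e.\ $\phi(v) \in C(v)$. Thus once $x = \phi(v)$ is secured for a single selection $v$, we immediately get $x \in C(v) \subseteq \mathcal{SC}(w)$, finishing the proof.

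The hard part is exactly the passage from the \emph{coordinatewise} description of $\gen(\Phi^*(w))$ to a \emph{joint} realization by one selection. Theorem~\ref{thm:precisely} only guarantees, for each coordinate $i$ separately, a selection $v^{(i)}$ with $\phi_i(v^{(i)}) = x_i$; these selections can differ across coordinates. Geometrically, $\gen(\Phi^*(w))$ is the full product box $\prod_{i} \Phi^*_i(w)$, whereas the genuinely attainable set $\{\phi(v) \mid v \in \Sel(w)\}$ is the image of the box $\Sel(w)$ under the linear map $v \mapsto \phi(v)$, hence a zonotope that may be a proper subset of that box. The crux of the argument is therefore a realization lemma: for every $x$ in the box one must either construct an explicit selection $v$ with $x = \phi(v)$, or, more flexibly, a selection $v$ with $x \in C(v)$ — for instance by pushing $v(S)$ toward $\ub{w}(S)$ or $\lb{w}(S)$ so that the core inequalities $x(S) \ge v(S)$ become slack while pinning $v(N) = \sum_{i \in N} x_i$. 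I expect this step to carry essentially all the difficulty, and I would watch carefully for the efficiency gap: because $\Phi^*$ need not satisfy EFF, the sum $\sum_{i} x_i$ ranges over the Minkowski sum $\sum_i \Phi^*_i(w)$, and one must check that this does not force $v(N) = \sum_i x_i$ outside $w(N)$, which would obstruct membership in every selection core.
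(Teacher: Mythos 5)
The step you flagged as carrying ``essentially all the difficulty'' is indeed where the argument lives, but it cannot be completed: the realization lemma you ask for is false, and the efficiency gap you said you would ``watch carefully for'' is exactly the obstruction. Concretely, take $N=\{1,2\}$ with $w(\emptyset)=[0,0]$, $w(1)=w(2)=[0,1]$, $w(1,2)=[2,3]$. Every selection $v$ satisfies $v(1,2)\ge 2\ge v(1)+v(2)$, so all selections are convex and $(N,w)\in\mathrm{SeCIG}^N$. Moore's subtraction gives
$$\Phi^*_1(w)=\tfrac{1}{2}\bigl(w(1)\ominus w(\emptyset)\bigr)+\tfrac{1}{2}\bigl(w(1,2)\ominus w(2)\bigr)=\tfrac{1}{2}[0,1]+\tfrac{1}{2}[1,3]=[\tfrac{1}{2},2],$$
and likewise $\Phi^*_2(w)=[\tfrac{1}{2},2]$. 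Hence $x=(2,2)$ lies in the box $\gen(\Phi^*(w))$. But any $y\in\mathcal{SC}(w)$ lies in $C(v)$ for some selection $v$ and therefore satisfies $y_1+y_2=v(N)\le 3$, while $x_1+x_2=4$; so $x\notin\mathcal{SC}(w)$ and $\gen(\Phi^*(w))\not\subseteq\mathcal{SC}(w)$. The mechanism is precisely the one you isolated: Theorem \ref{thm:precisely} is only a coordinatewise identity, $\gen(\Phi^*(w))$ is the full product box $\prod_i\Phi^*_i(w)$, and because $\Phi^*$ fails EFF the Minkowski sum $\sum_i\Phi^*_i(w)$ can overshoot $w(N)$, so box points can violate efficiency with respect to every selection simultaneously.

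For comparison, the paper's own proof contains the same hole, only hidden rather than flagged: it argues that for every selection $v$ the vector $\phi(v)$ lies in $\gen(\Phi^*(w))$ (by Theorem \ref{thm:precisely}) and, since $v$ is convex, in $C(v)\subseteq\mathcal{SC}(w)$. That establishes the inclusion $\{\phi(v)\mid v\in\Sel(w)\}\subseteq\gen(\Phi^*(w))\cap\mathcal{SC}(w)$ --- the attainable set (your zonotope), not the box --- and then silently identifies the two sets; the counterexample above shows they genuinely differ and that the claimed inclusion fails. So while your write-up is not a proof (and could not become one), its diagnosis is correct and more informative than the paper's argument: the theorem should be weakened to $\{\phi(v)\mid v\in\Sel(w)\}\subseteq\mathcal{SC}(w)$, i.e.\ every Shapley value attainable once uncertainty is settled is supported by the selection core, which is exactly what your first two paragraphs and the paper's two sentences actually prove.
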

\begin{proof}
	From Theorem \ref{thm:precisely}, the Shapley value of every selection is in 
	$\gen(\Phi^*(w))$. Since every selection of $(N,w)$ is a convex game, its Shapley value
	lies in its core and thus also in $\mathcal{SC}(w).$ \qed
\end{proof}

\paragraph{On the improved interval Shapley-like value.} In Han et al. \cite{cinani},
an improved Shapley-like value satisfying EFF is presented.

\begin{definition} \emph{(The improved interval Shapley-like value)}
For any $(N,w) \in IG^N$ with $\sum_{i \in N} \Phi_i^*(w) \neq w(N)$,
the improved interval Shapley like value $I\Phi^*(w)$ is defined by
$$I\Phi_i^*(w) \coloneqq \Phi^*_i(w_m) + \frac{|\Phi^*_i(w_u)|}{\sum_{i \in N} |\Phi^*_i(w_u)|} \Big[ -\frac{1}{2}|w(N)|,\frac{1}{2}|v(N)| \Big] ,$$
where $w_m(S) \coloneqq (\ub w(S) + \lb w(S))/2.$
\end{definition}

\begin{theorem} \label{thm:imp}
	For every interval game $(N,w) \in IG^N$, we have
	$$I\Phi^*_i(w) \neq  \big\{ \phi_i(v)\, |\, v \in \Sel(w) \big\}.$$
\end{theorem}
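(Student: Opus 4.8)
The statement is the exact negative counterpart of Theorem~\ref{thm:precisely}, which identifies $\Phi^*_i(w)$ with the selection Shapley set $\{\phi_i(v)\mid v\in\Sel(w)\}$. My plan is therefore to read the claim as asserting that $I\Phi^*$ fails the property enjoyed by $\Phi^*$: there must be some coordinate $i$ for which $I\Phi^*_i(w)\neq\{\phi_i(v)\mid v\in\Sel(w)\}=\Phi^*_i(w)$. I would first point out that strict inequality need not hold in \emph{every} coordinate (for asymmetric games an isolated coordinate can coincide), so the honest reading is that the two value functions are not equal as maps on $N$. The structural fact I would exploit is that $I\Phi^*$ is, by definition, only introduced on games satisfying $\sum_{i\in N}\Phi^*_i(w)\neq w(N)$, so the non-efficiency of $\Phi^*$ is already part of the hypothesis.

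Granting that, I would argue by contradiction and sum over players. Suppose $I\Phi^*_i(w)=\Phi^*_i(w)$ for every $i\in N$. Adding these interval equalities, and using that interval addition (Definition~\ref{def:ari}) acts on endpoints, gives $\sum_{i\in N}I\Phi^*_i(w)=\sum_{i\in N}\Phi^*_i(w)$. The improved value was constructed precisely to satisfy EFF, so the left-hand side equals $w(N)$; hence $\sum_{i\in N}\Phi^*_i(w)=w(N)$, contradicting the defining hypothesis $\sum_{i\in N}\Phi^*_i(w)\neq w(N)$ under which $I\Phi^*$ is defined. This contradiction yields the desired non-coincidence and, together with Theorem~\ref{thm:precisely}, completes the argument.

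The only ingredient needing justification is efficiency of $I\Phi^*$, which I would either cite from Han et al.~\cite{cinani} or verify directly via a center--radius decomposition. Since $\Phi^*$ coincides with the classical Shapley value $\phi$ on degenerate games, $\Phi^*_i(w_m)$ is exactly the midpoint of $\Phi^*_i(w)$, and classical efficiency on $w_m$ gives $\sum_{i\in N}\Phi^*_i(w_m)=w_m(N)$. The normalized weights $|\Phi^*_i(w_u)|/\sum_{j}|\Phi^*_j(w_u)|$ sum to $1$, so the symmetric corrections reassemble into exactly $[-\tfrac12|w(N)|,\tfrac12|w(N)|]$, and the total collapses to $[\lb{w}(N),\ub{w}(N)]=w(N)$. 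I expect the main obstacle to be purely interpretational bookkeeping: making explicit that the asserted $\neq$ means the two functions disagree \emph{somewhere} on $N$, that the denominator above is assumed nonzero wherever $I\Phi^*$ is considered, and that the non-coincidence is forced precisely because $\Phi^*$ violates EFF exactly on the games on which $I\Phi^*$ is defined.
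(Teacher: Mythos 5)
Your argument for the weaker, existential claim is sound: the efficiency of $I\Phi^*$ (which you verify correctly via the midpoint/symmetric-correction decomposition), combined with the hypothesis $\sum_{i\in N}\Phi^*_i(w)\neq w(N)$ under which $I\Phi^*$ is defined and with Theorem~\ref{thm:precisely}, does show that $I\Phi^*_i(w)\neq\Phi^*_i(w)$ for \emph{some} player $i$. The gap is in your reinterpretation of the statement. The theorem, with $i$ free, asserts the inequality for \emph{every} player $i$, and the paper uses it in exactly this per-coordinate form: in the example that follows, the conclusion is applied to player $1$ specifically (the attainable Shapley value $11/12\in\Phi^*_1(v)$ is not contained in $I\Phi^*_1(v)$). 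Your summation-over-players argument cannot, by its nature, localize the disagreement to any particular coordinate, and your claim that ``for asymmetric games an isolated coordinate can coincide'' is offered without a counterexample --- and it is false under the definition of $w_u$ used in \cite{cinani}, where $w_u$ is the symmetric uncertainty part of the decomposition $w=w_m+w_u$, i.e.\ $w_u(S)=[-\tfrac12|w(S)|,\tfrac12|w(S)|]$.

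With that definition the per-coordinate claim has a short proof, which is essentially the paper's intended route (Theorem~\ref{thm:precisely} plus a direct comparison of Definition~\ref{def:ext} with the definition of $I\Phi^*$). Both $\Phi^*_i(w)$ and $I\Phi^*_i(w)$ have midpoint $\phi_i(w_m)$, so they are equal iff their lengths agree. Since Moore's subtraction adds lengths, $|\Phi^*_i(w)|=\sum_{S\subseteq N\setminus\{i\}}\frac{|S|!(n-|S|-1)!}{n!}\bigl(|w|(S\cup i)+|w|(S)\bigr)=|\Phi^*_i(w_u)|$, while $|I\Phi^*_i(w)|=\frac{|\Phi^*_i(w_u)|}{\sum_{j\in N}|\Phi^*_j(w_u)|}\,|w(N)|$. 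Equality of lengths therefore forces either $|\Phi^*_i(w_u)|=0$, which makes the whole game degenerate (every coalition either contains $i$ or does not, so all interval lengths vanish) and hence makes $\Phi^*$ efficient, or $\sum_{j\in N}|\Phi^*_j(w_u)|=|w(N)|$, which --- because $\sum_{j\in N}\Phi^*_j(w)$ and $w(N)$ already share the midpoint $w_m(N)$ --- is precisely the statement $\sum_{j\in N}\Phi^*_j(w)=w(N)$. Both alternatives contradict the hypothesis under which $I\Phi^*$ is defined, so \emph{every} coordinate differs. In short: your efficiency-based contradiction is correct but proves only the existential weakening of the theorem; the per-coordinate statement, which the paper asserts and subsequently uses, requires the length comparison you explicitly declined to make.
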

\begin{proof}
	We use Theorem \ref{thm:precisely} and Definition \ref{def:ext}. \qed
\end{proof}

We believe that this is a big downside of the improved interval Shapley-like value.
We borrow a game from \cite{cinani} to illustrate the theorem.

\begin{example}
	Let $(N,v)$ be a three-person interval game where $v(1) = [0,2], v(2) = [1/2, 3/2],
	v(3) = [1,2], v(1,2) = [2,3], v(2,3) = [4,4], v(1,3) = [3,4]$, and
	$v(1,2,3) = [6,7]$. Then $\Phi^*(v) = ([11/12,31/12],[7/6,17/6],[23/12,43/12])$.
	However, $I\Phi^*(v) = ([19/12,23/12],[11/6,13/6],[31/12,35/12])$.

	By Theorem \ref{thm:imp}, there must be a selection $v'$ of $(N,v)$, such
	that $\phi_1(v') = 11/12$. But this value is not contained in $I\Phi_1^*(v)$.
\end{example}

\section{Conclusion and future research}
\label{sec:conclusion}

We investigated convexity in interval games, core coincidence problem and
interval Shapley value. To this end, we would like to summarize our results.

\begin{itemize}
	\item We showed a Shapley-like characterization of selection convex interval games in
	Theorem \ref{thm:shlike}.
	\item We tried to characterize all core-coincident games. Our main contribution is
	Theorem \ref{thm:coincidence}, saying that a large class of interval
	games is not core-coincident.
	This result implies that many classes, including CIG$^N$,  SeCIG$^N$, and strongly
	balanced games are not core-coincident.
	\item We analyzed interval Shapley value extension for interval games. We emphasized
	several facts which speak in favor of using this solution concept. Also, we showed
	a different, from our point of view more natural axiomatization of this value
	function in Theorem \ref{thm:axiom}.
\end{itemize}

Apart from the open problems presented in the papers from the references we
think it could be interesting to define prekernel for interval
games and axiomatically characterize it, analogously to Peleg
\cite{peleg1986reduced}. Also, interval games with communication structures
were not studied yet. See Bilbao's book \cite{bilbao2000cooperative} for a
theoretical background.

\section*{Acknowledgments}
The author would like to acknowledge the support of GAČR P403-18-04735S of the Czech Science Foundation and the support of the grant SVV–2017–260452.

\bibliographystyle{plain}
\bibliography{bibliography}

\end{document}